\newif\if@restonecol
\let\emptyset\varnothing
\newtheorem{theorem}{Theorem}
\newtheorem{assumption}{Assumption}
\newtheorem{lemma}{Lemma}
\newtheorem{corollary}{Corollary}
\newtheorem{remark}{Remark}
\newtheorem{definition}{Definition}
\def\dh{\hat{d}}
\def\cA{{\mathcal A}}
\def\cG{{\mathcal G}}
\def\cF{{\mathcal F}}
\def\cC{{\mathcal C}}
\def\cS{{\mathcal S}}
\def\cD{{\mathcal D}}
\def\cK{{\mathcal K}}
\def\cU{{\mathcal U}}
\def\cI{{\mathcal I}}
\def\cN{{\mathcal N}}
\def\cL{{\mathcal L}}
\newcommand{\pump}[2]{
	\ifthenelse{\equal{#2}{}}{g}{
		\ifthenelse{\equal{#1}{0}}{#2}{
			\ifthenelse{\equal{#1}{1}}{g(#2)}{
				g^{#1}(#2)
			}
		}
	}
}
\newcommand{\pumpdir}[2]{
	\ifthenelse{\equal{#1}{0}}{#2}{
		\ifthenelse{\equal{#1}{1}}{K #2 + 1}{
			\ifthenelse{\equal{#1}{2}}{K^2 #2 + K + 1}{
				\ifthenelse{\equal{#1}{3}}{K^3 #2 + K^2 + K + 1}{
					K^{#1} #2 + \sum_{i=0}^{i<#1} K^i
				}
			}
		}
	}
}
\newcommand{\pumpinv}[2]{
	\ifthenelse{\equal{#1}{0}}{#2}{
		\ifthenelse{\equal{#1}{1}}{\left\lceil \frac{#2 - 1}{K} \right\rceil}{
			\ifthenelse{\equal{#1}{2}}{\left\lceil \frac{#2 - K - 1}{K^2} \right\rceil}{
				\ifthenelse{\equal{#1}{3}}{\left\lceil \frac{#2 - K^2 - K- 1}{K^3} \right\rceil}{
					\left\lceil \frac{#2 - \sum_{i=0}^{i<#1} K^i}{K^{#1}} \right\rceil
				}
			}
		}
	}
}
\begin{document}
\def\dh{\hat{d}}
\def\ph{{\hat{p}}}
\newcommand{\bff}{{\boldsymbol{f}}}
\newcommand{\bg}{{\boldsymbol{g}}}
\newcommand{\bI}{{\boldsymbol{I}}}
\newcommand{\bu}{{\boldsymbol{u}}}
\newcommand{\bD}{{\boldsymbol{D}}}
\newcommand{\bv}{{\boldsymbol{v}}}
\newcommand{\bS}{{\boldsymbol{S}}}
\newcommand{\bCO}{{\boldsymbol{CO}}}
\newcommand{\bc}{{\boldsymbol{c}}}	
\newcommand{\bF}{{\boldsymbol{F}}}	
\newcommand{\bm}{{\boldsymbol{m}}}	
\def\QED{~\rule[-1pt]{5pt}{5pt}\par\medskip}
\def\ll{{\lambda}}
\def\aa{{\alpha}}
\def\ee{{\epsilon}}
\def\DD{{\Delta}}
\def\TT{{\Theta}}
\def\tht{{\theta}}
\def\non{\nonumber}
\def\te{{\theta}}
\def\tte{{\tilde{\theta}}}
\def\tom{{\tilde{\omega}}}
\def\be{{\bar{e}}}
\def\lb{\label}
\def\ap{\thickapprox}
\def\bi{\begin{itemize}}
\def\ba{\begin{assumption}}
\def\ea{\end{assumption}}
\def\ee{\end{equation}}
\def\ei{\end{itemize}}
\def\de{\delta}
\def\om{\omega}
\def\eps{\epsilon}
\def\xh{\hat{x}}
\def\xu{{\bf x}}
\def\xhu{{\hat{\bf x}}}
\def\qh{\hat{q}}
\def\uh{\hat{u}}
\def\tauh{\hat{\tau}}

\def\xt{\tilde{x}}
\def\xb{\bar{x}}
\def\cA{{\cal A}}
\def\cG{{\cal G}}
\def\cF{{\cal F}}
\def\cL{{\cal L}}
\def\cC{{\cal C}}
\def\cS{{\cal S}}
\def\cD{{\cal D}}
\def\cK{{\cal K}}
\def\cU{{\cal U}}
\def\cN{{\cal N}}
\def\cI{{\cal I}}
\def\bR{\mathbb R}
\def\cK{\mathcal K}
\def\bZ{\mathbb Z}
\def\cKL{\cal KL}
\def\cCh{\hat{{\cal C}}}
\begin{frontmatter}

\title{Razumikhin-type ISS Lyapunov function and small gain theorem for discrete time time-delay systems with application to a biased min-consensus protocol}

\thanks[footnoteinfo]{
	This work was supported by the National Natural Science Foundation of China under Grant No. 62303112 and Grant No. 62203109, and Natural Science Foundation of Jiangsu Province under Grant No. BK20220812.        }
\thanks[ack2]{Corresponding author: Wenwu Yu.}
\author[First]{Yuanqiu Mo}\ead{yuanqiumo@seu.edu.cn},
\author[First,Second]{Wenwu Yu\thanksref{ack2}}\ead{wwyu@seu.edu.cn},
\author[First,Second]{Huazhou Hou}\ead{huazhouhou@gmail.com},
\author[Third]{Soura Dasgupta}\ead{soura-dasgupta@uiowa.edu}

\address[First]{Southeast University, Nanjing 211189 China}
\address[Second]{Purple Mountain Laboratories, Nanjing 211111 China}
\address[Third]{University of Iowa, Iowa City (IA) 52242 USA}
\begin{keyword}                           
Lyapunov Function, Small Gain Theorem, The Shortest Path Algorithm, Robust Stability Analysis .               
\end{keyword}                             

\begin{abstract}                          
This paper considers small gain theorems for the global asymptotic and exponential input-to-state stability for discrete time time-delay systems using Razumikhin-type Lyapunov function. Among other things, unlike the existing literature, it provides both necessary and sufficient conditions for exponential input-to-state stability in terms of  the Razumikhin-type Lyapunov function and the small gain theorem. Previous necessary ad sufficient conditions were with the more computationally onerous, Krasovskii-type Lyapunov functions. The  result finds application in the robust stability analysis of a graph-based distributed algorithm, namely, the biased min-consensus protocol, which can be used to compute the length of the shortest path from each node to its nearest source in a graph. We consider the biased min-consensus protocol under perturbations that are common in communication networks, including  noise,  delay and asynchronous communication. By converting such a perturbed protocol into a discrete time time-delay nonlinear system, we prove its exponential input-to-state stability under perturbations using our Razumikhin-type Lyapunov-based small gain theorem. Simulations are provided to verify  the theoretical results.
\end{abstract}

\end{frontmatter}

\section{Introduction}
We provide necessary and sufficient conditions for the exponential input-to-state stability (expISS), defined in the sequel, of systems with delay using Razumikhin-type Lyapunov functions also defined below. This stands in contrast to existing results which provide only sufficient conditions.  In the last few decades there have been many contributions to the input-to-state stability (ISS) analysis of discrete time nonlinear systems. Roughly speaking, a system is ISS (resp. exponentially input-to-state stable, expISS) if its state trajectory with bounded
input remains bounded, and asymptotically (resp. exponentially) drops below a
function that increases with  the magnitude of the input. Among various stability tools used, ISS Lyapunov functions and Lyapunov-based small gain theorems have received considerable attention. Inspired by the ISS Lyapunov function established in \cite{jiang2001input}, ISS Lyapunov functions have been extensively studied in  discrete time. In \cite{geiselhart2017equivalent}, three types of ISS Lyapunov functions, namely, max-form, implication-form and dissipative-form ISS Lyapunov functions, are proposed, and their equivalence is characterized. 
\cite{geiselhart2016relaxed} introduces the dissipative-form finite-step  ISS Lyapunov function, which modifies the classical Lyapunov function to permit decrease in a fixed finite number of steps rather than at each step and incorporates that in \cite{jiang2001input} as a special one-step case. Further, the corresponding other finite-step versions of ISS Lyapunov functions proposed in \cite{noroozi2017nonconservative}, as well as their equivalence are characterized. With the proliferation of large-scale systems, Lyapunov-based small gain theorems for interconnected discrete time systems have also been well established. Leveraging the small gain condition designed in \cite{ruffer2010monotone}, small gain theorems in terms of max-form and dissipative-form finite-step ISS Lyapunov functions are established in \cite{geiselhart2016relaxed}. 
Using  max-form finite-step ISS Lyapunov functions, \cite{noroozi2017nonconservative} gives  sufficient and necessary small gain condition for the ISS of interconnected discrete time systems.

Though not as many as those for regular discrete time systems, there are also several  Lyapunov-based results for the ISS of discrete time time-delay systems.  There are two types of ISS Lyapunov functions that are mainly used for systems with delays:
Krasovskii-type and Razumikhin-type ISS Lyapunov functions. The former requires the construction of a
Lyapunov functional making use of an
augmentation of the state vector with all delayed states, while the latter  relies on a Lyapunov-type function defined in the original, non-augmented state space. Consequently, the Krasovskii approach, being constructed in a higher dimensional space, is computationally more complex than  Razumikhin\cite{gielen2012input}.

 In \cite{LIU2009567}, max-form and dissipative-form Razumikhin-type ISS Lyapunov functions have been developed for the ISS and expISS of discrete time time-delay systems. In \cite{gielen2012input}, sufficient conditions for  ISS in terms of Krasovskii-type and Razumikhin-type ISS Lyapunov functions have been derived, and it has been shown that the Krasovskii-type ISS Lyapunov function can be constructed using its  Razumikhin-type counterpart. Sufficient and necessary conditions for the ISS of discrete time time-delay nonlinear systems in terms of Krasovskii-type ISS Lyapunov functions have been characterized in \cite{pepe2017lyapunov}, and such conditions are further derived for discrete time delay-dependent nonlinear systems \cite{pepe2020}. Though, in general, the Razumikhin method is known to provide  only sufficient conditions, \cite{gielen2013necessary} has provided Razumikhin-type sufficient and necessary conditions for the semi-global asymptotic stability and global exponential stability, as opposed to expISS or ISS of delay difference equations. 
 
 While there exist several results on Lyapunov-based small gain theorems for interconnected discrete time systems, very few papers study small gain approaches to the stability analysis of interconnected discrete time time-delay systems. These are needed for  networked control systems like multiagent systems \cite{xu2018consensusability}, formation control \cite{jia2021time} where frequent transmission delays are manifest. The seminal work \cite{gielen2012input} gives a small-gain condition for the ISS of discrete time nonlinear systems with local delays using both max-form Razumikhin-type and Krasovskii-type ISS Lyapunov functions. The Krasovskii-type Lyapunov-based small gain theorems for the global asymptotic stability and ISS of discrete time time-delay systems are in \cite{battista2018small}.

A key point motivating this paper is  the lack of necessary and sufficient conditions for expISS of delay systems using Razumikhin as opposed to Karsovskii-type ISS Lyapunov functions. As noted above, use of Razumikhin-type ISS Lyapunov functions is preferable as they are computationally simpler than their Krasovskii counterparts. Thus, we develop  dissipative-form, as opposed to max-form used in \cite{gielen2012input}, Razumikhin-type ISS Lyapunov functions and small gain conditions  for the ISS and expISS of discrete time time-delay system. The derived  results are non-conservative as they are also  necessary for expISS. 

It is important to note the difference between max and dissipative form Lyapunov functions. Max-form ones  also satisfy the requirement of their dissipative counterparts, though the converse does not always hold. Thus the class of  systems admitting dissipative-form Lyapunov functions is broader than those having max-form ones. Thus using dissipative form Lyapunov functions is much more desirable  as they are more widely applicable.

We further apply our stability result to the robust stability analysis of a biased min-consensus protocol \cite{zhang2017perturbing,mo2019robustness}, which provides a distributed solution to the shortest path finding problem. Previous papers only analyzed its behavior under separate single type of perturbations, e.g., \cite{mo2019robustness} proved its ultimate boundedness under additive bounded noise, and \cite{zhang2017perturbing} studied its convergence under time delays and asynchronous communication. By leveraging the  small gain approach of this paper,  we prove that the biased min-consensus protocol is expISS under the simultaneous presence of  persistent noise, time delays and asynchronous communication, i.e., the estimation error of the protocol will decrease exponentially fast below a bound determined by the extent of these perturbations. 

The rest of the paper is organized as follows: Section \ref{sec:notation} introduces  notations and definitions. Section \ref{sec:main} proposes the Razumikhin-type ISS Lyapunov function and the Lyapunov-based small gain theorem. Section \ref{sec:app} demonstrates the efficacy of the proposed ISS Lyapunov function by applying it to the stability analysis of the biased min-consensus protocol. Section \ref{sec:simulations} provides the simulation. Section \ref{sec:conclusion} concludes.

\subsection{Notations and Definitions}\label{sec:notation}
Define $\mathbb{R}, \mathbb{R}_+, \mathbb{Z}$ and $\mathbb{Z}_+$ as the set of real numbers, the set of nonnegative real numbers, the set of integers and the set of nonnegative integers, respectively. For $(c_1,c_2) \in \mathbb{R}^2$ with $c_1 < c_2$ and $\Pi \subseteq \mathbb{R}$, define $\Pi_{\geq c_1} := \{k \in \Pi ~|~ k \geq c_1 \}$ and $\Pi_{[c_1, c_2]} := \{k \in \Pi ~|~ c_1 \leq k \leq c_2\}$. For any $x \in \mathbb{R}^n$, denote $|x|$, $|x|_\infty$ and $||x||$ as the Euclidean norm, the $\ell_\infty$ norm and any arbitrary monotonic norm on $x$, respectively. For any function $\phi: \mathbb{Z}_+ \rightarrow \bR^m$, its sup-norm is denoted by $||\phi||_{\infty} = \sup\{||\phi(k)||: k \in \bZ_{+}\}$. The set of all functions $\mathbb{Z}_+ \rightarrow \mathbb{R}^m$ with finite sup-norm is denoted by $\ell^\infty$. Let $x := \{x(l) \in \mathbb{R}^n \}_{l \in \mathbb{Z}}$ denote an arbitrary sequence, define $x_{[c_1,c_2]} := \{x(l) \}_{l \in \mathbb{Z}_{[c_1,c_2]}}$ with $(c_1, c_2)\in \mathbb{Z}^2$ and $c_1 < c_2$ as a sequence ordered monotonically with respect to the index $l \in \mathbb{Z}_{[c_1,c_2]}$. With a slight abuse of notation, $||x_{[c_1,c_2]}|| := \max_{l \in [c_1, c_2]}\{||x(l)|| \}$. Further, $\mathrm{id}: \mathbb{R}_+^n \rightarrow \mathbb{R}_+^n$ denotes the identity function on $\mathbb{R}_+^n$, i.e., $\mathrm{id}(x) = x, \forall x \in \mathbb{R}_+^n$. We use $\lambda_1\circ\lambda_2$ to denote the composition of two functions $\lambda_1: \bR^n \rightarrow \bR^n$ and $\lambda_2: \bR^n \rightarrow \bR^n$. Further, we use $\underset{i=1}{\overset{n}{\mathrm{C}}} \lambda_i$ to denote the composition of $\lambda_i: \bR^n \rightarrow \bR^n$ with $i \in \{1,2,\cdots,n\}.$ A function $\alpha: \bR_+ \rightarrow \mathbb{R}_+$ is said to belong to class $\mathcal{K}$ if it is continuous, strictly increasing and $\alpha(0) = 0$. Moreover, $\alpha \in \mathcal{K}_{\infty}$ if $\alpha \in \mathcal{K}$ and $\mathrm{lim}_{s\rightarrow\infty}\alpha(s) = \infty$.

A function $\beta$ is said to belong to class $\mathcal{KL}$ if for a fixed $s \in \mathbb{R}_+$, $\beta(\cdot,s) \in \mathcal{K}$, and for a fixed $r \in \mathbb{R}_+$, $\beta(r,\cdot)$ is decreasing and $\mathrm{lim}_{s \rightarrow \infty}\beta(\cdot,s) = 0$. For $\alpha \in \mathcal{K}$, we write $\alpha < \mathrm{id}$ to mean $\alpha(s) < s$ for all $s \neq 0$.

We consider $\ell$ interconnected discrete time time-delay nonlinear subsystems such that the $i$-th subsystem obeys
\begin{flalign}\label{eq:composite}
x_i(k\!+\!1) \!=\!  f_i\big(x_1(k \!-\! \tau_{i1}(k)),\! \cdots\!, x_\ell(k \!-\! \tau_{i\ell}(k)), u(k \!-\! d) \big)
\end{flalign}
where $k \in \mathbb{Z}_+$, $f_i: \mathbb{R}^{n_1} \times \cdots \times \mathbb{R}^{n_\ell}\times\mathbb{R}^m \rightarrow \mathbb{R}^{n_i}$ may not be continuous and satisfies $f_i(0,\cdots,0) = 0$, $\tau \geq \tau_{ij}(k) \in \mathbb{Z}_+$ with $j \in \{1,\cdots, \ell\}$ reflects the time delay between subsystem $i$ and subsystem $j$, with $\tau$ denoting the maximum time delay and $\tau_{ij}(k) = 0$ indicating that there is no time delay between $i$ and $j$ at time $k$, and $d \in \mathbb{Z}_+$ denotes the delay in the input. 

Let $x(k) := [x_1(k)^\mathrm{T} ~ \cdots~ x_\ell(k)^\mathrm{T}]^\top \in \mathbb{R}^n$ with $n = \sum_{i=1}^{\ell} n_i$, the composite system can be described by 
\begin{equation}\label{eq:compositesystem}
x(k+1) = G\big(x_{[k-\tau, k]}, u(k-d)   \big), ~ k \in \mathbb{Z}_+
\end{equation}
where $x_{[k-\tau, k]} \in (\mathbb{R}^n)^{\tau + 1}$, $G: (\mathbb{R}^n)^{\tau + 1} \times \mathbb{R}^m \rightarrow \mathbb{R}^n$ and $G(0_{[k-\tau, k]}, 0) = 0$ 

We use $\{x(k, \xi_{[-\tau,0]}, u_{[0,k-1-d]})  \}_{k \in \mathbb{Z}_{\geq 1}}$ to denote the trajectory of the system (\ref{eq:compositesystem}) with $\xi_{[-\tau,0]} \in (\mathbb{R}^n)^{\tau + 1}$ the initial state and $u_{[0,k-1-d]} := \{u(l)\}_{l \in \mathbb{Z}_{[0,k-1-d]}}, u(l) \in \mathbb{R}^m$ the input. Similarly, $\{x_i(k, \xi_{[-\tau,0]}, u_{[0,k-1-d]})  \}_{k \in \mathbb{Z}_{\geq 1}}$ is used to denote the trajectory of (\ref{eq:composite}). To simplify the notation, we further denote $x(k) := x(k, \xi_{[-\tau,0]}, u_{[0,k-1-d]})$ and $x_i(k) := x_i(k, \xi_{[-\tau,0]}, u_{[0,k-1-d]})$ for $k \in \bZ_{\geq 1}$. Note that by the equivalence of norms, for any norm $||\cdot||$ on $\mathbb{R}^n$, there exists a constant $q \geq 1$ such that
\begin{equation}\label{eq:q}
    || x(k) || \leq q \max_{i \in \{1,\cdots,\ell \} } ||x_i(k)||
\end{equation}

\begin{definition}\label{def:expISS}
\cite{LIU2009567} We call (\ref{eq:compositesystem}) globally asymptotically input-to-state stable (ISS) if there exist $\beta \in \mathcal{KL}$ and $\lambda \in \mathcal{K}$ such that for all initial states $\xi_{[-\tau,0]} \in (\mathbb{R}^n)^{\tau + 1}$, all inputs $u(\cdot) \in \ell^\infty(\mathbb{R}^m)$ and all $k \in 
	\bZ_{+}$
	\begin{equation}\label{eq:expISS}
	||x(k)|| \leq \beta(||\xi_{[-\tau,0]}||,k) + \lambda(||u||_\infty),
	\end{equation}
	In particular, following the definition in \cite{geiselhart2016relaxed}, if $\beta$ in (\ref{eq:expISS}) can be chosen as
	\begin{equation}\label{eq:rate}
	\beta(r,k) = p\rho^kr
	\end{equation}
	with $p \geq 1$ and $\rho \in [0,1)$, then (\ref{eq:composite}) is called globally exponentially input-to-state stable (expISS).  
\end{definition}
Though $G$ in (\ref{eq:compositesystem}) is not required to be continuous, it needs to satisfies the $\mathcal{K}$-boundedness property introduced in the following definition throughout the paper.
\begin{definition}\label{def:kbound}
The function $G$ in (\ref{eq:composite}) is globally $\mathcal{K}$-bounded, i.e., there exist functions $\omega_1,\omega_2 \in \mathcal{K}$ such that for all $\xi = \{\xi(l) \}_{l \in [1, \tau + 1]} \in (\mathbb{R}^{n})^{\tau + 1}$ and $\mu \in \mathbb{R}^{m}$ such that
\begin{equation}\label{eq:kbound}
||G(\xi, \mu)|| \leq \omega_1(||\xi||) + \omega_2(||\mu||).
\end{equation}
\end{definition}
\begin{remark}\label{re:continuity}
It follows directly from (\ref{eq:kbound}) that global $\mathcal{K}$-boundedness implies continuity of $G$ at the origin and boundedness of $G$ on bounded sets. The converse implication also holds true by Lemma 5 in \cite{geiselhart2017equivalent}. Further, it follows from Remark 3.3 in \cite{geiselhart2016relaxed} that global $\mathcal{K}$-boundedness is a necessary condition for the ISS of discrete time system without time delays, i.e., $x(k+1) = G(x(k), u(k))$ with $G: \mathbb{R}^n \times \mathbb{R}^m \rightarrow \mathbb{R}^n$, and it can be readily verified that such a result can be extended for discrete time time-delay system defined in (\ref{eq:compositesystem}).
\end{remark} 
To derive sufficient and necessary conditions for the expISS of (\ref{eq:compositesystem}), we also assume (\ref{eq:compositesystem}) admits a solution of length $M + 1$ with $M \geq \tau$, per the definition:
\begin{definition}\label{def:length}
    \cite{gielen2013necessary} (\ref{eq:compositesystem}) admits a solution of length $M + 1$ with $M \geq \tau$ if, for each $M \geq \tau$ there holds $x(k + 1) = G(x_{[k - \tau,k]}, u(k - d))$ for all $k \in \mathbb{Z}_{[-M + \tau, 0]}$. Obviously, (\ref{eq:compositesystem}) admits a solution of length $\tau + 1$.
\end{definition}

We further make the following assumptions on each subsystem.
\begin{assumption}\label{ass:vslf}
Consider the subsystem $i \in \{1, \cdots, \ell\}$ defined in (\ref{eq:composite}). There exists a real valued function $V_i: \bR^{n_i} \rightarrow \bR_{+}$ such that the following holds:
\begin{itemize}
	\item There exist $\mathcal{K}_{\infty}$ functions $\alpha_{i1}$ and $\alpha_{i2}$ such that
	\begin{equation}\label{eq:fir}
	\alpha_{i1}(||\xi_i||) \leq V_i(\xi_i) \leq \alpha_{i2}(||\xi_i||),~ \forall \xi_i \in \bR^{n_i}.
	\end{equation}
	\item  With $\tau$ the maximum time delay, there exist linear $\mathcal{K}_\infty$ function $\lambda_{ij}$, $\mathcal{K}$ function $\lambda_{iu}$ and a non-negative integer $M \geq \tau$ such that for all $x_{i_{[-M, 0]}} \in (\bR^{n_i})^{M + 1}$ with $i \in \{1,\cdots, \ell\}$, all $u \in \ell^\infty(\mathbb{R}^m)$ and $k \in \mathbb{Z}_+$, there holds
	\begin{flalign}\label{eq:vslf}
	V_i(x_i(k + 1)) &\leq \max_{\theta \in \mathbb{Z}_{[k-M,k]}, j\in \{1,\cdots, \ell \}}\lambda_{ij}\big(V_j( x_j(\theta)) \big) \nonumber \\
	&+ \lambda_{iu}(||u||_\infty).
	\end{flalign}
\end{itemize}
\end{assumption}
Assumption \ref{ass:vslf} constrains   the trajectory of a subsystem $i$ by   another $j$ and the input at  time  $k $. As such  (\ref{eq:vslf}) is a dissipative Lyapunov inequality. This contrasts with the max-form variation  given in
\cite{gielen2012input}, where (\ref{eq:vslf}) is replaced by $V_i\big(x_i(k + 1)\big) \leq$
\[ 
\max\big\{ \max_{\theta \in \mathbb{Z}_{[k - M,k]},j\in \{1,\cdots, \ell \}}\lambda_{ij}\big(V_j( x_j(\theta) ) \big), \lambda_{iu}(||u||_\infty)\big\}. \]
Here the summation in  (\ref{eq:vslf}) is replaced by a max operation. This max-form inequality  implies (\ref{eq:vslf}) though the converse may not hold highlighting  the wider applicability of dissipative-form Lyapunov functions.

Observe that (\ref{eq:vslf}) puts us in a  finite-step ISS Lyapunov framework, with the maximizing $\theta$ representing the step size over which the constraining inequality holds. Unlike, 
\cite{geiselhart2016relaxed} and \cite{noroozi2017nonconservative} that adopt a similar characterization to construct the finite-step ISS Lyapunov function via a small gain approach, the  $\theta$ in (\ref{eq:vslf})  is allowed to be time-varying. In \cite{geiselhart2016relaxed} and \cite{noroozi2017nonconservative} it is constant over all $k.$ It is important to note that (\ref{eq:fir})-(\ref{eq:vslf}) is not equivalent to the
Razumikhin-type ISS Lyapunov function given in \cite{LIU2009567} as the latter requires $\lambda_{ij}$ in (\ref{eq:vslf}) to satisfy $\lambda_{ij} \leq \rho\mathrm{id}$ with $\rho \in [0, 1)$, which ensures that with zero input the Lyapunov function strictly decreases from its maximum value among previous $M$ steps.

We end this section with the following definition.

\begin{definition}\label{def:csybsys}
We call $\lambda_{ij}$ and $j$ in (\ref{eq:vslf}) the constraining comparison function and the constraining subsystem of subsystem $i$ at time $k + 1$, respectively.
\end{definition}

\section{ISS Lyapunov function and Small gain theorem}\label{sec:main}
In this section, we prove the ISS and expISS for discrete time time-delay system (\ref{eq:compositesystem}) via both Razumikhin-type ISS Lyapunov function and small gain theorem. Furthermore, converse Lyapunov  and small gain theorems for expISS are also given.

\subsection{The Razumikhin-type ISS Lyapunov function}
We first characterize Razumikhin-type Lyapunov based sufficient conditions under which (\ref{eq:compositesystem}) is ISS while ignoring the interconnections. The  Razumikhin-type Lyapunov is in (\ref{eq:uplower}) and (\ref{eq:Lya}) of the theorem statement.
\begin{theorem}\label{the:suffISS}
	The system in (\ref{eq:compositesystem}) is ISS if there exists a function $V: \bR^{n} \rightarrow \bR_{+}$ obeying the following condition:
	
		a) There exist $\mathcal{K}_\infty$ functions $\underline{\alpha}$ and $\bar{\alpha}$ such that
		\begin{equation}\label{eq:uplower}
		\underline{\alpha}(||\xi||) \leq V(\xi) \leq \bar{\alpha}(||\xi||),~ \forall \xi \in \mathbb{R}^n.
		\end{equation}
		b) With $\tau$ the maximum time delay, for all $k \in \mathbb{Z}_+$ there exist $\kappa \in [0, 1)$, $M \in \mathbb{Z}_{\geq \tau}$ and $\lambda_u \in \mathcal{K}$ such that
		\begin{flalign}\label{eq:Lya}
		V(x(k+1)) \leq  \max_{\theta \in \mathbb{Z}_{[k - M, k]}}\kappa V(x(\theta)) + \lambda_u(||u||_\infty)
		\end{flalign}
\end{theorem}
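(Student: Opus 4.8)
The plan is to recast the finite-step inequality (\ref{eq:Lya}) as a geometric recursion over blocks of time of length $M+1$ and then read off the bound required by Definition \ref{def:expISS}. Throughout write $c:=\lambda_u(||u||_\infty)$ and $b:=c/(1-\kappa)$. A preliminary step deals with the pre-history: by Definition \ref{def:length} the trajectory is defined on $\mathbb{Z}_{[-M,0]}$, and iterating the $\mathcal{K}$-boundedness of $G$ (Definition \ref{def:kbound}) finitely many times together with the upper bound in (\ref{eq:uplower}) yields $P_0:=\max_{\theta\in\mathbb{Z}_{[-M,0]}}V(x(\theta))\le \bar\alpha\big(\psi_1(||\xi_{[-\tau,0]}||)+\psi_2(||u||_\infty)\big)$ for some $\psi_1\in\mathcal{K}_\infty$ and $\psi_2\in\mathcal{K}$.

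Next I would partition $\mathbb{Z}_{\geq -M}$ into the consecutive length-$(M+1)$ blocks $B_0:=\mathbb{Z}_{[-M,0]}$ and $B_j:=\mathbb{Z}_{[(j-1)(M+1)+1,\,j(M+1)]}$ for $j\geq1$, and set $P_j:=\max_{\theta\in B_j}V(x(\theta))$. The crucial elementary observation is that whenever $k+1\in B_{j+1}$ one has $\mathbb{Z}_{[k-M,k]}\subseteq B_j\cup B_{j+1}$ — it is precisely here that the block length $M+1$ is matched to the time-varying window $\mathbb{Z}_{[k-M,k]}$ appearing in (\ref{eq:Lya}). Applying (\ref{eq:Lya}) then gives $V(x(k+1))\le\kappa\max\{P_j,P_{j+1}\}+c$ for every $k+1\in B_{j+1}$, and maximizing over the block yields the self-referential inequality $P_{j+1}\le\kappa\max\{P_j,P_{j+1}\}+c$. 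Resolving it by cases ($P_{j+1}\geq P_j$ forces $P_{j+1}\le b$, while $P_{j+1}<P_j$ gives $P_{j+1}\le\kappa P_j+c$) and using the identity $\kappa b+c=b$, one obtains $\max\{P_{j+1}-b,0\}\le\kappa\max\{P_j-b,0\}$, hence $P_j\le\kappa^j P_0+b$ for all $j\geq0$.

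To finish, note that $k\in B_j$ implies $j\geq k/(M+1)$, so with $\rho:=\kappa^{1/(M+1)}\in[0,1)$ (the case $\kappa=0$ being trivial) the block bound becomes $V(x(k))\le\rho^k P_0+b$ for all $k\in\mathbb{Z}_+$. Combining this with the lower bound $\underline\alpha(||x(k)||)\le V(x(k))$ in (\ref{eq:uplower}), the estimate for $P_0$, the identity $b=\lambda_u(||u||_\infty)/(1-\kappa)$, and the standard weak-triangle inequalities for $\mathcal{K}_\infty$ functions (together with $\rho^k\le1$), one separates the part depending on $||\xi_{[-\tau,0]}||$ from the part depending on $||u||_\infty$ and arrives at $||x(k)||\le\beta(||\xi_{[-\tau,0]}||,k)+\lambda(||u||_\infty)$ with, e.g., $\beta(r,k):=\underline\alpha^{-1}\big(4\rho^k\bar\alpha(2\psi_1(r))\big)\in\mathcal{KL}$ and an explicit $\lambda\in\mathcal{K}$. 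This is exactly (\ref{eq:expISS}), establishing ISS.

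The step I expect to be the main obstacle is the block recursion: verifying the containment $\mathbb{Z}_{[k-M,k]}\subseteq B_j\cup B_{j+1}$ uniformly over all $k+1\in B_{j+1}$ (which fails for any block length other than $M+1$, and is the mechanism by which the Razumikhin-type time-varying constraint is tamed), and then correctly extracting geometric decay from the self-referential inequality $P_{j+1}\le\kappa\max\{P_j,P_{j+1}\}+c$. By comparison, the pre-history bound from Definition \ref{def:length} and the closing $\mathcal{K}$/$\mathcal{KL}$ bookkeeping are routine.
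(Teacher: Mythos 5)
Your core mechanism is sound and is a genuinely different route from the paper's. The paper iterates the maximizing index of (\ref{eq:Lya}) backwards in time, decomposes $k=m_1+\sum_{i=2}^{s}(m_i+1)+j$ with $j\in\mathbb{Z}_{[-\tau,M-\tau]}$, and counts $s\geq\frac{k-j+1}{M+1}$ to extract the decay rate $\kappa^{\frac{1}{M+1}}$; you instead partition time into length-$(M+1)$ blocks, use the containment $\mathbb{Z}_{[k-M,k]}\subseteq B_j\cup B_{j+1}$ (which is correct, and correctly identified as the place where the block length must equal $M+1$), and resolve the self-referential inequality $P_{j+1}\leq\kappa\max\{P_j,P_{j+1}\}+c$ into $P_j\leq\kappa^jP_0+b$. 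Both arguments yield the same rate and essentially the same final $\mathcal{K}/\mathcal{KL}$ bookkeeping; your block recursion arguably makes the step-counting more transparent.

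The genuine gap is the pre-history step, precisely the one you dismiss as routine. Global $\mathcal{K}$-boundedness of $G$ propagates bounds \emph{forward} in time: it yields $||x_{[0,M-\tau]}||\leq\alpha^{\ast}(||x_{[-\tau,0]}||)+\lambda^{\ast}(||u||_\infty)$, which is exactly how the paper uses it (via Lemma 12 of \cite{bobiti2014input}). It cannot bound $x(\theta)$ for $\theta\in\mathbb{Z}_{[-M,-\tau-1]}$ in terms of $\xi_{[-\tau,0]}$: under Definition \ref{def:length} those states are additional \emph{backward} data from which $\xi_{[-\tau,0]}$ is generated, not the other way around, and $G$ is not assumed invertible. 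Hence your claimed estimate $P_0:=\max_{\theta\in\mathbb{Z}_{[-M,0]}}V(x(\theta))\leq\bar{\alpha}\big(\psi_1(||\xi_{[-\tau,0]}||)+\psi_2(||u||_\infty)\big)$ is unjustified as stated, and without it the recursion is anchored to a quantity that the ISS estimate (\ref{eq:expISS}) does not permit. The repair is simple and actually shortens your proof: interpret the maximum in (\ref{eq:Lya}) as ranging only over the defined trajectory, i.e.\ $\theta\geq-\tau$ (this is what the paper implicitly does by stopping its backward iteration at $j\in\mathbb{Z}_{[-\tau,M-\tau]}$), take $B_0:=\mathbb{Z}_{[-\tau,0]}$, check that the truncated window still lies in $B_0\cup B_1$ for $k+1\in B_1$, and then $P_0\leq\bar{\alpha}(||\xi_{[-\tau,0]}||)$ follows directly from (\ref{eq:uplower}), with no appeal to $\mathcal{K}$-boundedness and no $\psi_1,\psi_2$ at all.
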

\begin{proof}
From (\ref{eq:Lya}), for all $k \in \mathbb{Z}_+$, $V(x(k + 1))$ obeys
\begin{equation}\label{eq:first}
 V(x(k+1)) \leq \kappa V(x(k - m_1)) + \lambda_u(||u||_\infty)
\end{equation}
where in (\ref{eq:first}) we assume $m_1 = \arg \max_{\theta \in \mathbb{Z}_{[-M, 0]}}\kappa V(x(k - \theta))$. By a simple induction there exist with $m_i \in \mathbb{Z}_{[0, M]}$ such that proceeding in this way, there holds
\begin{flalign}
&V(x(k+1)) \leq  \nonumber \\
& \!\! \lambda_u(||u||_\infty) \!+\! \kappa^2 \big(V(x(k \!-\! m_1\! - \!1 \!- \!m_2))\!\! +\! \!\lambda_u(||u||_\infty) \big)   \label{eq:uag}\\
& \cdots \nonumber \\
&\!\!\leq\! \kappa^s\big(V( x(k \!-\! (s\! -\! 1)\! -\! \! \sum_{i = 1}^{s}m_i))\big)\! +\! \sum_{i = 0}^{s - 1}\kappa^i\lambda_u(||u||_\infty) \label{eq:uu} \\
&\!\! \leq\! \kappa^s\big(V( x(k - (s - 1) -  \sum_{i = 1}^{s}m_i)) \big) + \frac{1}{1 - \kappa}\lambda_u(||u||_\infty) \nonumber 
\end{flalign}
where (\ref{eq:uag}) uses (\ref{eq:Lya}), in (\ref{eq:uu}) we assume $k$ is decomposed as $k = m_1 + \sum_{i = 2}^{s}(m_i + 1) + j$  for some $j \in \mathbb{Z}_{[-\tau, M - \tau]}$, and the last inequality uses $\sum_{i = 0}^{s - 1}\kappa^i < \sum_{i = 0}^{\infty}\kappa^i = \frac{1}{1 - \kappa}$ as $\kappa \in [0, 1)$. As $m_i \leq M$ for all $i \in \{1,\cdots, s\}$, there holds $\sum_{i=1}^{s}m_i = k - j - (s - 1) \leq sM$, leading to $s \geq \frac{k - j + 1}{M + 1}$. 

As $\kappa \in [0, 1)$ and $s \geq \frac{k - j + 1}{M + 1}$, it follows from (\ref{eq:uplower}) that $V(x(k + 1))$ further obeys
\begin{flalign}
&V(x(k + 1)) \leq \kappa^{\frac{k - j + 1}{M + 1}}\big(V( x(j)) \big) + \frac{1}{1 - \kappa}\lambda_u(||u||_\infty) \nonumber \\
&\leq \kappa^{\frac{k + 1 + \tau - M}{M + 1}}\bar{\alpha}(||x(j)||) + \frac{1}{1 - \kappa}\lambda_u(||u||_\infty) \label{eq:jj} \\
&\leq \kappa^{\frac{k + 1 + \tau - M}{M + 1}}\bar{\alpha}(\alpha^\ast(||x_{[-\tau, 0]}||) + \lambda^\ast(|| u ||_\infty)) \nonumber \\
&~~~ + \frac{1}{1 - \kappa}\lambda_u(||u||_\infty) \label{eq:ule1}
\end{flalign}
where (\ref{eq:jj}) uses $j \in \mathbb{Z}_{[-\tau, M - \tau]}$, (\ref{eq:ule1}) uses the fact that $||x_{[0,M-\tau]}|| \leq \alpha^\ast(||x_{[-\tau, 0]}||) + \lambda^\ast(|| u ||_\infty)$ with $\mathcal{K}$ functions $\alpha^\ast \geq \mathrm{id}$ and $\lambda^\ast$  due to the global $\mathcal{K}$-boundedness of $G$ (see Lemma 12 in \cite{bobiti2014input}).
From (\ref{eq:uplower}), $||x(k + 1)||$ obeys
\begin{flalign}
&||x(k + 1)|| \nonumber\\
& \leq \underline{\alpha}^{-1}\circ 2\kappa^{\frac{k + 1}{M + 1}}\kappa^{\frac{\tau - M}{M + 1}}\mathrm{id}\circ\bar{\alpha}(\alpha^\ast(|| x_{[-\tau, 0]} || ) + \lambda^\ast(|| u ||_\infty) ) \nonumber \\
&~~~ +  \underline{\alpha}^{-1}\circ\frac{2}{1 - \kappa}\lambda_u(||u||_\infty) \label{eq:le} \\
&\leq \underline{\alpha}^{-1}\circ 2\kappa^{\frac{k + 1}{M + 1}}\kappa^{\frac{\tau - M}{M + 1}}\mathrm{id}\circ\bar{\alpha}\circ2\mathrm{id}\circ\alpha^\ast(|| x_{[-\tau, 0]} || )\nonumber \\
&~~~ + \bar{\lambda}_u(||u||_\infty) \label{eq:sl}
\end{flalign}
where (\ref{eq:le}) and (\ref{eq:sl}) use the fact that $\alpha(a + b) \leq \alpha(2a) + \alpha(2b)$ for any $\mathcal{K}$ function $\alpha$ and all $a, b \in \mathbb{R}_+$, and in (\ref{eq:sl}) $\bar{\lambda}_u= \underline{\alpha}^{-1}\circ 2\kappa^{\frac{1 + \tau - M}{M + 1}}\mathrm{id}\circ\bar{\alpha}\circ2\mathrm{id}\circ\lambda^\ast + \underline{\alpha}^{-1}\circ\frac{2}{1 -\kappa}\mathrm{id}\circ\lambda_u \in \mathcal{K}$.
Let $c = 2\kappa^{\frac{\tau - M}{M + 1}} \geq 1$, $\bar{\kappa} = \kappa^{\frac{1}{M + 1}} \in [0, 1)$, and $\beta(r, t) :=  \underline{\alpha}^{-1}\circ c\bar{\kappa}^t\mathrm{id}\circ\bar{\alpha}\circ\alpha^\ast(r)$. Obviously, $\beta \in \mathcal{KL}$, and (\ref{eq:sl}) becomes 
\begin{equation}\label{eq:exp}
||x(k + 1)|| \leq \beta(|| x_{[-\tau, 0]} ||, k + 1) + \bar{\lambda}_u(||u||_\infty) 
\end{equation}
Therefore, (\ref{eq:compositesystem}) is ISS by Definition \ref{def:expISS}. 
\end{proof}
Indeed the $V(\cdot)$ in this theorem is a  dissipative-form Razumikhin-type ISS Lyapunov function.
\begin{definition}\label{def:Razumikhin}
The real valued function $V: \bR^{n} \rightarrow \bR_{+}$ satisfying (\ref{eq:uplower}) and (\ref{eq:Lya}) is called a dissipative-form Razumikhin-type ISS Lyapunov function for (\ref{eq:compositesystem}).
\end{definition}
It can be observed from (\ref{eq:uplower})-(\ref{eq:Lya}) that the constraint for each subsystem given by (\ref{eq:fir})-(\ref{eq:vslf}) in Assumption \ref{ass:vslf} is not the dissipative-form Razumikhin-type ISS Lyapunov function defined in Definition \ref{def:Razumikhin} as $\lambda_{ij}$ in (\ref{eq:vslf}) is not required to satisfy $\lambda_{ij} \leq \kappa\mathrm{id}$ with $\kappa \in [0,1)$.

We now turn to a sufficient condition for expISS.
\begin{corollary}\label{corr:exp}
	Suppose conditions in Theorem \ref{the:suffISS} hold,  $\underline{\alpha}, \bar{\alpha}$ in (\ref{eq:uplower}) and $\omega_1$ in (\ref{eq:kbound}) are linear. Then (\ref{eq:compositesystem}) is expISS. 
\end{corollary}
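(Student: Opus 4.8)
The plan is to reopen the proof of Theorem~\ref{the:suffISS} and observe that, under the extra hypotheses, every $\mathcal K$ function feeding the $\mathcal{KL}$ bound constructed there is linear, so that this bound collapses to the exponential form~(\ref{eq:rate}).

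First I would note that when $\omega_1$ in~(\ref{eq:kbound}) is linear, the $\mathcal K$ function $\alpha^\ast \ge \mathrm{id}$ supplied by Lemma~12 of~\cite{bobiti2014input} in the estimate $||x_{[0,M-\tau]}|| \le \alpha^\ast(||x_{[-\tau,0]}||) + \lambda^\ast(||u||_\infty)$ used at~(\ref{eq:ule1}) may be taken linear. Indeed $\alpha^\ast$ is obtained by iterating the $\mathcal K$-boundedness recursion $||x(k+1)|| \le \omega_1(||x_{[k-\tau,k]}||) + \omega_2(||u||_\infty)$ a fixed, finite number of steps over windows of $\tau+1$ states; with $\omega_1(s)=a_1 s$ this recursion is linear in the state, so the dependence on the initial data $x_{[-\tau,0]}$ is a constant multiple $a^\ast s$ (with $a^\ast \ge 1$), while the dependence on $u$ remains a $\mathcal K$ function $\lambda^\ast$ (no linearity of $\omega_2$ is needed, consistent with the statement of the corollary).

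Next I would substitute the linear forms $\underline{\alpha}(s)=\underline{a}\,s$, $\bar{\alpha}(s)=\bar{a}\,s$ and $\alpha^\ast(s)=a^\ast s$ into the function $\beta(r,t) = \underline{\alpha}^{-1}\circ c\bar{\kappa}^t\mathrm{id}\circ\bar{\alpha}\circ\alpha^\ast(r)$ built at the end of that proof, where $c = 2\kappa^{(\tau-M)/(M+1)} \ge 1$ and $\bar{\kappa} = \kappa^{1/(M+1)} \in [0,1)$. Since composing linear maps multiplies slopes, $\beta(r,t) = p\,\bar{\kappa}^t r$ with $p := c\,\bar{a}\,a^\ast/\underline{a}$ (and $p$ may be replaced by $\max\{p,1\}$ so that $p \ge 1$), $\bar{\kappa}\in[0,1)$. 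Feeding this into~(\ref{eq:exp}) gives $||x(k+1)|| \le p\,\bar{\kappa}^{k+1}||x_{[-\tau,0]}|| + \bar{\lambda}_u(||u||_\infty)$ with $\bar{\lambda}_u \in \mathcal K$, which is exactly the expISS estimate of Definition~\ref{def:expISS} with rate~(\ref{eq:rate}) ($\rho=\bar{\kappa}$).

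The one step that is not mere bookkeeping is the first: checking that the construction behind Lemma~12 of~\cite{bobiti2014input} yields a linear $\alpha^\ast$ when $\omega_1$ is linear, i.e.\ that a bounded-length iteration of a recursion linear in the state keeps the dependence on the initial data linear. Everything else amounts to tracking that the remaining manipulations in the proof of Theorem~\ref{the:suffISS} --- the splittings $\alpha(a+b)\le\alpha(2a)+\alpha(2b)$ and the definition of $\bar{\lambda}_u$ --- preserve linearity of the initial-data term while only enlarging the $u$-dependent term, which Definition~\ref{def:expISS} leaves unconstrained.
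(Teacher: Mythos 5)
Your proposal is correct and follows essentially the same route as the paper: reopen the proof of Theorem~\ref{the:suffISS}, use linearity of $\underline{\alpha}$, $\bar{\alpha}$ and of $\alpha^\ast$ (the latter following from linearity of $\omega_1$) to conclude that $\beta(r,t)=\underline{\alpha}^{-1}\circ c\bar{\kappa}^t\mathrm{id}\circ\bar{\alpha}\circ\alpha^\ast(r)$ is linear in $r$, hence dominated by $p\rho^t r$ with $\rho=\bar{\kappa}$. The only minor differences are that the paper justifies linearity of $\alpha^\ast$ by citing Corollary~5.7 of \cite{geiselhart2016relaxed} rather than re-deriving it from the iterated $\mathcal{K}$-bound as you sketch, and it reworks the estimate directly from (\ref{eq:ule1}) using exact linear splitting instead of keeping the factor-of-two splittings of (\ref{eq:sl}), which only changes the constant $p$.
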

\begin{proof}
As $\underline{\alpha}$ in (\ref{eq:uplower}) is linear, by (\ref{eq:uplower}) and (\ref{eq:ule1}),
\begin{flalign}
&\!\!\!\!||x(k + 1)|| \leq  \underline{\alpha}^{-1}\circ\frac{1}{1 - \kappa}\lambda_u(||u||_\infty) + \nonumber \\
&\!\!\!\! \underline{\alpha}^{-1}\circ c\bar{\kappa}^{k+1}\mathrm{id}\circ\bar{\alpha}(\alpha^\ast(||x_{[-\tau, 0]}||) + \lambda^\ast(|| u ||_\infty)) \label{eq:lin} \\
&\!\!\!\!= \underline{\alpha}^{-1}\circ c\bar{\kappa}^{k+1}\mathrm{id}\circ\bar{\alpha}\circ \alpha^\ast(||x_{[-\tau, 0]}||) + \bar{\lambda}_u(||u||_\infty) \label{eq:an}
\end{flalign}
where in (\ref{eq:lin}) $c = \kappa^{\frac{\tau - M}{M + 1}} \geq 1$, $\bar{\kappa} = \kappa^{\frac{1}{M+1}} \in [0, 1)$ with $\kappa$ defined in (\ref{eq:Lya}), and in (\ref{eq:an}) $\bar{\lambda}_u = \underline{\alpha}^{-1}(\bar{\kappa}^{\frac{1 + \tau - M}{M + 1}}\mathrm{id}\circ\bar{\alpha}\circ \lambda^\ast + \frac{1}{1 - \kappa}\lambda_u) \in \mathcal{K}$ with $\lambda_u$ defined in (\ref{eq:Lya}). 

As $\omega_1$ is linear, $\alpha^\ast$ is linear (see Corollary 5.7 in \cite{geiselhart2016relaxed}). Let $\beta(r,t) = \underline{\alpha}^{-1}\circ c\bar{\kappa}^t\mathrm{id}\circ\bar{\alpha}\circ\alpha^\ast(r)$. $\beta(r,t)$ is linear for a fixed $t$. Based on Definition \ref{def:expISS}, (\ref{eq:compositesystem}) is expISS, e.g., in this case by choosing $\rho = \bar{\kappa} \in [0, 1)$, we can always find a $\beta'(r, t) = p\rho^tr \geq \beta(r,t)$ for all $r \in \mathbb{R}_+$ and all $t \in \mathbb{Z}_+$ with $p \geq 1$. 
\end{proof}
While Corollary \ref{corr:exp} provides Razumikhin-type sufficient conditions for expISS, the following theorem further proves that such conditions are also necessary for the expISS of (\ref{eq:compositesystem}).
\begin{theorem}\label{the:conLya}
	The system (\ref{eq:compositesystem}) is expISS if and only if it admits the Razumikhin-type ISS Lyapunov defined in Definition \ref{def:Razumikhin}. Moreover, $\bar{\alpha}$ and $\underline{\alpha}$ defined in (\ref{eq:uplower}), as well as $\omega_1$ defined in (\ref{eq:kbound}) are linear.
\end{theorem}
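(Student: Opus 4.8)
The plan is to prove the two implications separately, with essentially all of the work in the necessity (``only if'') direction. For sufficiency, suppose (\ref{eq:compositesystem}) admits a function $V$ satisfying (\ref{eq:uplower})--(\ref{eq:Lya}) --- a dissipative-form Razumikhin-type ISS Lyapunov function in the sense of Definition \ref{def:Razumikhin} --- whose $\underline{\alpha},\bar{\alpha}$ in (\ref{eq:uplower}) are linear and for which the standing $\mathcal{K}$-boundedness gain $\omega_1$ in (\ref{eq:kbound}) is linear; then Corollary \ref{corr:exp} applies verbatim and gives expISS. So it remains to show that \emph{expISS implies} (i) $\omega_1$ in (\ref{eq:kbound}) can be taken linear, and (ii) there is a Razumikhin-type ISS Lyapunov function whose $\underline{\alpha},\bar{\alpha}$ are linear. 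I would also fix at the outset that the expISS estimate (\ref{eq:expISS})--(\ref{eq:rate}) is read uniformly with respect to the initial time instant --- the natural convention here, consistent with the way (\ref{eq:Lya}) and (\ref{eq:kbound}) are imposed uniformly in $k$ --- so that the tail of any trajectory is again a trajectory to which the estimate applies.

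For (i), I would read a one-step bound off the expISS estimate. Taking the zero input and $k=1$ in (\ref{eq:expISS})--(\ref{eq:rate}) gives $\|x(1,\zeta,0)\| \le p\rho\|\zeta\|$ for every initial history $\zeta \in (\mathbb{R}^n)^{\tau+1}$; since $x(1,\zeta,u)=G(\zeta,u(-d))$, choosing the input that equals a prescribed $\mu$ at the one instant feeding $x(1)$ (directly when $d=0$; otherwise first propagate the history $d$ steps using the already-assumed $\mathcal{K}$-boundedness, or pre-augment the state with $d$ shift registers to reduce to $d=0$ without affecting expISS) yields $\|G(\zeta,\mu)\| \le p\rho\|\zeta\| + \lambda(\|\mu\|)$ for all $\zeta,\mu$. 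Hence $\omega_1 = p\rho\,\mathrm{id}$ and $\omega_2=\lambda$ work, and $\omega_1$ is linear.

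For (ii), the key observation is that the ambient norm itself is a valid choice: set $V(\xi) := \|\xi\|$, so that (\ref{eq:uplower}) holds with $\underline{\alpha}=\bar{\alpha}=\mathrm{id}$, both linear. To verify (\ref{eq:Lya}), fix $k\in\mathbb{Z}_+$ and choose $m\in\mathbb{Z}_+$ large enough that $\kappa := p\rho^{m+1}\in[0,1)$ (possible since $\rho\in[0,1)$), and put $M := m+\tau \ge \tau$. Because (\ref{eq:compositesystem}) admits a solution of length $M+1$ (Definition \ref{def:length}), $x(\theta)$ is defined for all $\theta \ge -M$, so for every $k\ge 0$ the window $x_{[k-M,\,k-m]}$ (a history of length $\tau+1$, since $k-M=k-m-\tau$) exists, and the tail $\{x(k-m+i)\}_{i\ge -\tau}$ is the trajectory started at time $k-m$ from that history under a time-shifted input of sup-norm at most $\|u\|_\infty$. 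Applying (\ref{eq:expISS})--(\ref{eq:rate}) to this tail over the horizon $(k+1)-(k-m)=m+1$ gives
\[
\|x(k+1)\| \le p\rho^{m+1}\,\|x_{[k-M,\,k-m]}\| + \lambda(\|u\|_\infty) = \kappa\max_{\theta\in\mathbb{Z}_{[k-M,\,k-m]}}\|x(\theta)\| + \lambda(\|u\|_\infty),
\]
and since $\mathbb{Z}_{[k-M,\,k-m]}\subseteq\mathbb{Z}_{[k-M,\,k]}$ the right-hand side is at most $\max_{\theta\in\mathbb{Z}_{[k-M,k]}}\kappa V(x(\theta)) + \lambda_u(\|u\|_\infty)$ with $\lambda_u := \lambda\in\mathcal{K}$. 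This is precisely (\ref{eq:Lya}), so $V=\|\cdot\|$ is the desired Razumikhin-type ISS Lyapunov function with linear $\underline{\alpha},\bar{\alpha}$; combined with (i) this finishes the necessity direction.

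The step I expect to require the most care is the ``restart'' used in (ii): interpreting the tail of a trajectory as a fresh trajectory to which the expISS estimate re-applies. For time-invariant $G$ this is immediate from causality, but since the interconnection delays $\tau_{ij}(k)$ are permitted to depend on $k$ it genuinely hinges on the uniform-in-initial-time reading of Definition \ref{def:expISS}; I would state that reading explicitly. The accompanying bookkeeping --- that the history window needed for the restart is available for every $k\ge0$ --- is what Definition \ref{def:length} is for, and I would invoke it for exactly that purpose.
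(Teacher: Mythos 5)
Your proposal is correct and takes essentially the same route as the paper's proof: sufficiency is delegated to Corollary \ref{corr:exp}, the linear $\omega_1 = p\rho\,\mathrm{id}$ is read off the $k=1$ instance of the expISS estimate, and necessity of the Lyapunov function is obtained by taking $V=\|\cdot\|$ (so $\underline{\alpha}=\bar{\alpha}=\mathrm{id}$) and re-applying the expISS bound to a restarted trajectory over a window long enough that $p\rho^{m+1}<1$, which matches the paper's choice $\bar{M}>\log_\rho\frac{1}{p}$ and $M=\tau+\bar{M}-1$. Your explicit treatment of the input delay $d$ and of the uniform-in-initial-time reading of Definition \ref{def:expISS} simply spells out steps the paper uses implicitly.
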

\begin{proof}
	The sufficiency is proved in Corollary \ref{corr:exp}. For  necessity, as (\ref{eq:compositesystem}) is expISS, it follows from Definition \ref{def:expISS} that 
	\begin{equation}\label{eq:to}
	||x(k)|| \leq p\rho^k(||\xi_{[-\tau,0]}||) + \lambda(||u||_\infty)
	\end{equation}
	with $p \geq 1$, $\rho \in [0, 1)$ and $||\xi_{[-\tau,0]}|| \in (\mathbb{R}^n)^{\tau + 1}$ the initial state. Setting $k = 1$ in (\ref{eq:to}), we obtain 
	\begin{eqnarray}
	||x(1)|| & = & G(\xi_{[-\tau,0]}, u(0 - d)) \nonumber \\
	&\leq& p\rho(||\xi_{[-\tau,0]}||) + \lambda(||u||_\infty), \nonumber
	\end{eqnarray}
	and thus (\ref{eq:kbound}) holds with a linear $\mathcal{K}$ function $\omega_1 = p\rho\mathrm{id}$.
	
	Further, it follows from (\ref{eq:to}) that for any $k + 1 \geq \bar{M} > \log_\rho \frac{1}{p} > 0$ with $\bar{M} \in \mathbb{Z}_{\geq 1}$, there exists a $\bar{\rho} \in [0, 1)$ such that for $k \in \mathbb{Z}_{\geq \bar{M} - 1}$ 
	\begin{equation}\label{eq:kb}
	||x(k + 1)|| \leq \bar{\rho}(||\xi_{[k + 1 -\tau - \bar{M}, k + 1 - \bar{M}]}||) + \lambda(||u||_\infty). 
	\end{equation}
	Let $V(\cdot) := ||\cdot||$. Then (\ref{eq:uplower}) holds with $\bar{\alpha} = \underline{\alpha} = \mathrm{id}$. As (\ref{eq:compositesystem}) admits a solution of length $M + 1$ with $M \geq \tau$, let $M = \tau + \bar{M} - 1 \in \mathbb{Z}_{\geq \tau}$, it follows from (\ref{eq:kb}) that for all $k \in \mathbb{Z}_+$
	\begin{equation}
	V(x(k+1)) \leq  \max_{\theta \in \mathbb{Z}_{[k- M, k]}}\bar{\rho} V(x(\theta)) + \lambda(||u||_\infty),
	\end{equation}  
	completing our proof.
\end{proof}

\subsection{The small gain theorem for interconnected systems}
Having provided a necessary and sufficient condition for the expISS of the delay difference equations using a Razumikhin-type ISS Lyapunov function, we now turn to providing  a  small gain condition for interconnected systems involving  delay difference equations. The theorem explicitly takes into account the properties of system interconnections. We need the following assumption:
\begin{assumption}\label{ass:small}
The linear $\mathcal{K}_\infty$ functions $\lambda_{ij}$ in (\ref{eq:vslf}) satisfy
\begin{equation}\label{eq:small}
    \lambda_{i_1i_2}\circ\lambda_{i_2i_3}\circ\cdots \circ \lambda_{i_{r-1}i_r} < \mathrm{id}
\end{equation}
for all sequences $(i_1, \cdots, i_r) \in \{1, \cdots, \ell \}^r$ with $r \in \{1, \cdots, \ell\}$.
\end{assumption}
This small gain condition  in Assumption \ref{ass:small} follows those in \cite{noroozi2017nonconservative,ruffer2010monotone,geiselhart2016relaxed}, the main difference between this work and these others is in two aspects: 1) in this work the time step (i.e., $\theta$ in (\ref{eq:vslf})) in Lyapunov-like state estimate defined in Assumption \ref{ass:vslf} for each subsystem can be time-varying while theirs relies on a fixed time dependency; and 2) by utilizing the above small gain condition this work aims to establish a dissipative-form Razumikhin-type ISS Lyapunov function defined in Definition \ref{def:Razumikhin}, allowing the maximizing $\theta$ in (\ref{eq:Lya}) to be time-varying, while theirs uses a finite-step ISS Lyapunov framework.

The Lyapunov-based small gain theorem for the ISS of (\ref{eq:compositesystem}) is given below.
\begin{theorem}\label{thm:first}
Suppose Assumptions \ref{ass:vslf} and \ref{ass:small} hold. Then (\ref{eq:compositesystem}) admits a dissipative-form Razumikhin-type ISS Lyapunov function given in (\ref{eq:uplower}) and (\ref{eq:Lya}) and is ISS.
\end{theorem}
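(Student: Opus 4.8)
The plan is to build a single weighted max-form Lyapunov function for the composite system out of the subsystem functions $V_i$, using the small gain condition to pick the weights. Concretely, I would set
\[
V(x) := \max_{i \in \{1,\dots,\ell\}} \frac{V_i(x_i)}{v_i}
\]
for positive constants $v_1,\dots,v_\ell$ to be specified, verify that $V$ satisfies (\ref{eq:uplower}) and (\ref{eq:Lya}), and then invoke Theorem \ref{the:suffISS} to obtain ISS.

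The constants $v_i$ are to be extracted from Assumption \ref{ass:small}. Since each $\lambda_{ij}$ is a linear $\mathcal{K}_\infty$ function, write $\lambda_{ij}(s)=\gamma_{ij}s$ with $\gamma_{ij}\ge 0$. The cyclic inequalities (\ref{eq:small}) are precisely the small gain condition for the max-aggregated gain operator associated with the matrix $[\gamma_{ij}]$, and by the standard theory of monotone operators and $\Omega$-paths (\cite{ruffer2010monotone}; see also \cite{geiselhart2016relaxed,noroozi2017nonconservative}) this guarantees a strictly positive vector $(v_1,\dots,v_\ell)$ with $\gamma_{ij}v_j<v_i$ for all $i,j$; intuitively $v_i$ may be taken comparable to the largest product of gains along walks ending at node $i$, which is finite exactly because every cycle has gain $<1$. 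In particular $\kappa:=\max_{i,j}\gamma_{ij}v_j/v_i\in[0,1)$.

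With the $v_i$ in hand, the two Lyapunov conditions become bookkeeping. For (\ref{eq:uplower}): by (\ref{eq:fir}) and equivalence of norms — (\ref{eq:q}) gives $\|x\|\le q\max_i\|x_i\|$, while $\|x_i\|\le \bar q\|x\|$ for some $\bar q\ge 1$ — the functions $\underline{\alpha}(s):=\min_i v_i^{-1}\alpha_{i1}(s/q)$ and $\bar{\alpha}(s):=\max_i v_i^{-1}\alpha_{i2}(\bar q s)$ lie in $\mathcal{K}_\infty$ and sandwich $V$ as required, the lower bound using the index realizing $\|x_i\|\ge\|x\|/q$. For (\ref{eq:Lya}): set $M:=\max_i M_i\ge\tau$, where $M_i$ is the integer from Assumption \ref{ass:vslf} for subsystem $i$ (enlarging the max-window only weakens (\ref{eq:vslf})), divide (\ref{eq:vslf}) by $v_i$, and bound each term $\gamma_{ij}V_j(x_j(\theta))/v_i=(\gamma_{ij}v_j/v_i)\,(V_j(x_j(\theta))/v_j)\le \kappa\,V(x(\theta))\le\kappa\max_{\theta\in\mathbb{Z}_{[k-M,k]}}V(x(\theta))$; taking the maximum over $i$ yields $V(x(k+1))\le\kappa\max_{\theta\in\mathbb{Z}_{[k-M,k]}}V(x(\theta))+\lambda_u(\|u\|_\infty)$ with $\lambda_u(s):=\max_i v_i^{-1}\lambda_{iu}(s)\in\mathcal{K}$, which is exactly (\ref{eq:Lya}). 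Hence $V$ is a dissipative-form Razumikhin-type ISS Lyapunov function (Definition \ref{def:Razumikhin}), and Theorem \ref{the:suffISS} gives ISS of (\ref{eq:compositesystem}).

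The only step I expect to demand genuine work is the second one: turning the cyclic conditions (\ref{eq:small}) into the existence of weights $v_i$ with $\kappa<1$; the remainder is manipulation of maxima and $\mathcal{K}$-functions. I would also emphasize that, because (\ref{eq:vslf}) involves a \emph{maximum} over $j$ rather than a sum, only the max-type consequence $\gamma_{ij}v_j<v_i$ of the small gain condition is used, so no summability-type condition on the gains is required.
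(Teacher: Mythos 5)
Your proposal is correct and follows essentially the same route as the paper: your weighted max function $V(x)=\max_i V_i(x_i)/v_i$ is exactly the paper's $V(\xi)=\max_i \sigma_i^{-1}(V_i(\xi_i))$ with linear scalings $\sigma_i(s)=v_is$, and the paper likewise obtains the weights (its condition (\ref{eq:smallr}) with $\kappa<1$) by citing the standard monotone-operator/$\Omega$-path results rather than constructing them, before verifying (\ref{eq:uplower})--(\ref{eq:Lya}) and invoking Theorem \ref{the:suffISS}. The bookkeeping you describe (inserting $\sigma_j\circ\sigma_j^{-1}$, using linearity to split the additive input term, enlarging the window to a common $M$) matches the paper's derivation step for step.
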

\begin{proof}
As the $\mathcal{K}_\infty$ functions $\lambda_{ij}$ in (\ref{eq:Lya}) is linear, it follows \cite{geiselhart2016relaxed} (see Corollary 5.7 and Theorem 6.4) and \cite{geiselhart2012numerical} that there exist linear $\mathcal{K}_\infty$ functions $\sigma_i$ with $i \in \{1,\cdots, \ell \}$ and $\kappa \in [0, 1)$  such that 
\begin{equation}\label{eq:smallr}
\max_{j \in \{1,\cdots, \ell \}} \sigma_i^{-1}\circ\lambda_{ij}\circ\sigma_j < \kappa\mathrm{id}.
\end{equation}
Define a real valued function $V: \mathbb{R}^n \rightarrow \mathbb{R}_+$ as
\begin{equation}\label{eq:Lya1}
V(\xi) := \max_{i\in \{1, \cdots,\ell\}} \sigma_i^{-1}(V_i(\xi)) 
\end{equation}
with $\xi = [\xi_1^\top, \cdots, \xi_\ell^\top]^\top$ and $V_i$ defined in Assumption \ref{ass:vslf}. From (\ref{eq:fir}), there exist $\mathcal{K}_\infty$ functions $\underline{\alpha}$ and $\bar{\alpha}$ such that
\begin{equation}\label{eq:lu}
\underline{\alpha}(||\xi||) \leq V(\xi) \leq \bar{\alpha}(||\xi||),~ \forall \xi \in \mathbb{R}^n.
\end{equation} 
Let $i = \arg \max_{i \in \{1, \cdots,\ell\}} \sigma_i^{-1}(V_i(x_i(k + 1)))$ for some $k \in \mathbb{Z}_+$. Then it follows from (\ref{eq:vslf}) that
\begin{flalign}
& V(x(k+1)) = \sigma_i^{-1}(V_i(x_i(k + 1)))   \nonumber \\
&  \leq \sigma_i^{-1}\big(\max_{\substack{\theta \in \mathbb{Z}_{[k-M,k]} \\ j\in \{1,\cdots, \ell \}}}\lambda_{ij}\big(V_j( x_j(\theta) ) \big)   + \lambda_{iu}(||u||_\infty)\big)  \label{eq:uas} \\
& =  \sigma_i^{-1}\big(\max_{\theta \in \mathbb{Z}_{[k-M,k]}, j\in \{1,\cdots, \ell \}}\lambda_{ij}\circ\sigma_j\circ\sigma_j^{-1}\circ\big(V_j( x_j(\theta) ) \big) 
\nonumber \\
&~~~ + \lambda_{iu}(||u||_\infty)\big)\nonumber \\  
&\leq  \sigma_i^{-1}\big(\max_{\substack{\theta \in \mathbb{Z}_{[k-M,k]} \\ j,l\in \{1,\cdots, \ell \}}}\lambda_{ij}\circ\sigma_j\circ\sigma_l^{-1}\circ\big(V_l( x_l(\theta) ) \big)\big) \nonumber \\
&~~~ + \sigma_i^{-1}\circ\lambda_{iu}(||u||_\infty) 
\label{eq:nad} \\
& \leq  \max_{\theta \in \mathbb{Z}_{[k-M, k]}}\kappa V(x(\theta)) + \lambda_u(||u||_\infty) \label{eq:fiu}
\end{flalign}
where in (\ref{eq:uas}) we assume $j$ is the constraining subsystem (per Definition \ref{def:csybsys}) of $i$ at time $k + 1$, (\ref{eq:nad}) uses the fact that $\sigma_i^{-1}$ is linear, and (\ref{eq:fiu}) uses (\ref{eq:smallr}) and (\ref{eq:Lya1}). Then it follows from Definition \ref{def:Razumikhin} that $V$ in (\ref{eq:Lya1}) is a dissipative-form Razumikhin-type ISS Lyapunov function for (\ref{eq:compositesystem}), and thus (\ref{eq:compositesystem}) is ISS by  Theorem \ref{the:suffISS}.
\end{proof}
The following Corollary further characterizes the conditions under which (\ref{eq:compositesystem}) is expISS using the Lyapunov-based small gain theorem introduced in Theorem \ref{thm:first}.
\begin{corollary}\label{co:small}
	Suppose conditions in Theorem \ref{thm:first} hold. Furthermore, suppose $\alpha_{i1}$ and $\alpha_{i2}$ in (\ref{eq:fir}) and $\omega_1$ in (\ref{eq:kbound}) are linear. Then (\ref{eq:compositesystem}) is expISS. 
\end{corollary}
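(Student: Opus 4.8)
The plan is to combine Theorem \ref{thm:first} with Corollary \ref{corr:exp} (or equivalently Theorem \ref{the:conLya}) in a straightforward way, the only point of care being to track which functions remain linear through the construction in the proof of Theorem \ref{thm:first}. By Theorem \ref{thm:first}, since Assumptions \ref{ass:vslf} and \ref{ass:small} hold, the composite system (\ref{eq:compositesystem}) admits the dissipative-form Razumikhin-type ISS Lyapunov function $V(\xi) = \max_{i} \sigma_i^{-1}(V_i(\xi))$ built in (\ref{eq:Lya1}), where the $\sigma_i$ are the \emph{linear} $\mathcal{K}_\infty$ functions produced by the small gain construction (Corollary 5.7 and Theorem 6.4 of \cite{geiselhart2016relaxed}), and (\ref{eq:uplower})–(\ref{eq:Lya}) hold with some $\kappa \in [0,1)$ and $\lambda_u \in \mathcal{K}$.

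First I would observe that $\underline{\alpha}$ and $\bar{\alpha}$ in (\ref{eq:lu}) are linear. Indeed, from (\ref{eq:fir}) the bounds $\alpha_{i1}(\|\xi_i\|) \le V_i(\xi_i) \le \alpha_{i2}(\|\xi_i\|)$ with $\alpha_{i1},\alpha_{i2}$ linear by hypothesis, combined with $\sigma_i^{-1}$ linear and the equivalence of norms (the constant $q$ in (\ref{eq:q})), give that $V(\xi) = \max_i \sigma_i^{-1}(V_i(\xi))$ is sandwiched between two linear functions of $\|\xi\|$; concretely one may take $\underline{\alpha} = \min_i \sigma_i^{-1}\circ\alpha_{i1}\circ(\mathrm{id}/q)$ and $\bar{\alpha} = \max_i \sigma_i^{-1}\circ\alpha_{i2}$, both linear since a min or max of finitely many linear $\mathcal{K}_\infty$ functions is again linear $\mathcal{K}_\infty$. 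Thus the hypotheses of Corollary \ref{corr:exp} are met: the conditions of Theorem \ref{the:suffISS} hold (this is the content of Theorem \ref{thm:first}), $\underline{\alpha},\bar{\alpha}$ in (\ref{eq:uplower}) are linear, and $\omega_1$ in (\ref{eq:kbound}) is linear by assumption.

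Then I would simply invoke Corollary \ref{corr:exp} to conclude that (\ref{eq:compositesystem}) is expISS. The only step requiring genuine thought is the first one — verifying that the $\sigma_i$ delivered by the small gain scaling construction are linear and hence that $V$ inherits linear upper and lower bounds; this is where one must lean on the cited linearity-preservation results of \cite{geiselhart2016relaxed,geiselhart2012numerical}. Everything else is bookkeeping: once $\underline{\alpha}$, $\bar{\alpha}$, and $\omega_1$ are linear, the exponential rate $\rho = \bar{\kappa} = \kappa^{1/(M+1)} \in [0,1)$ and a constant $p \ge 1$ as in (\ref{eq:an}) can be read off directly from the proof of Corollary \ref{corr:exp}, yielding the decay form (\ref{eq:rate}) required by Definition \ref{def:expISS}.
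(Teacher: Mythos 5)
Your proposal is correct and follows essentially the same route as the paper: invoke Theorem \ref{thm:first} to obtain the Razumikhin-type ISS Lyapunov function $V$ of (\ref{eq:Lya1}), observe that linearity of $\alpha_{i1}$, $\alpha_{i2}$ and the $\sigma_i$ makes $\underline{\alpha}$ and $\bar{\alpha}$ in (\ref{eq:lu}) linear, and then apply Corollary \ref{corr:exp} with the assumed linear $\omega_1$. Your explicit formulas $\underline{\alpha} = \min_i \sigma_i^{-1}\circ\alpha_{i1}\circ(\mathrm{id}/q)$ and $\bar{\alpha} = \max_i \sigma_i^{-1}\circ\alpha_{i2}$ merely make concrete what the paper states in one line, so there is no substantive difference.
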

\begin{proof}
	As $\alpha_{i1}$, $\alpha_{i2}$ in (\ref{eq:fir}) and $\sigma_i$ in (\ref{eq:smallr}) are linear, it follows from (\ref{eq:Lya1}) that $\bar{\alpha}$ and $\underline{\alpha}$ in (\ref{eq:lu}) can be linear. Further, Theorem \ref{thm:first} implies that (\ref{eq:compositesystem}) admits a Razumikhin-type ISS Lyapunov function if conditions in Theorem \ref{thm:first} hold. By Corollary \ref{corr:exp}, (\ref{eq:compositesystem}) is expISS.
\end{proof}
Now we are ready to give the converse small gain theorem for the expISS of the interconnected system in (\ref{eq:compositesystem}).
\begin{theorem}
Consider the interconnected discrete time time-delay system (\ref{eq:compositesystem}). It is expISS if and only if conditions in Theorem \ref{thm:first} hold and $\omega_1$ in (\ref{eq:kbound}) of Definition \ref{eq:kbound}, as well as $\alpha_{i1}$ and $\alpha_{i2}$ in (\ref{eq:fir}) are linear.
\end{theorem}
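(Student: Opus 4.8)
The plan is to dispatch the two directions separately. Sufficiency is immediate: if the hypotheses of Theorem~\ref{thm:first} hold and, in addition, $\alpha_{i1},\alpha_{i2}$ in (\ref{eq:fir}) and $\omega_1$ in (\ref{eq:kbound}) are linear, then Corollary~\ref{co:small} already yields that (\ref{eq:compositesystem}) is expISS, so nothing new is required there. All the content is in the necessity direction, where I would treat the composite system (\ref{eq:compositesystem}) as a single delay difference equation and reuse the constructions in the proof of Theorem~\ref{the:conLya}. Concretely, expISS gives (\ref{eq:to}); evaluating it at $k=1$ shows, exactly as in the proof of Theorem~\ref{the:conLya}, that $G$ satisfies (\ref{eq:kbound}) with the linear function $\omega_1=p\rho\,\mathrm{id}$; and, for any horizon $\bar M\in\mathbb{Z}_{\geq 1}$ with $\bar M>\log_\rho\frac{1}{p}$, there is $\bar\rho:=p\rho^{\bar M}\in[0,1)$ and, with $M:=\tau+\bar M-1\geq\tau$, the Razumikhin estimate $||x(k+1)||\leq\max_{\theta\in\mathbb{Z}_{[k-M,k]}}\bar\rho\,||x(\theta)||+\lambda(||u||_\infty)$ for all $k\in\mathbb{Z}_+$, which is (\ref{eq:kb}) rewritten in the max form used at the end of that proof (applied to the shifted initial data $x_{[k+1-\tau-\bar M,\,k+1-\bar M]}$).

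Next I would convert this single-system estimate into the per-subsystem data required by Assumptions~\ref{ass:vslf} and~\ref{ass:small}. For every $i\in\{1,\dots,\ell\}$ set $V_i:=||\cdot||$ on $\mathbb{R}^{n_i}$; then (\ref{eq:fir}) holds with $\alpha_{i1}=\alpha_{i2}=\mathrm{id}$, both linear. Since the norm is monotonic, $||x_i(k+1)||\leq||x(k+1)||$, so combining the displayed Razumikhin estimate with (\ref{eq:q}) gives $||x_i(k+1)||\leq\max_{\theta\in\mathbb{Z}_{[k-M,k]}}\bar\rho q\max_{j}||x_j(\theta)||+\lambda(||u||_\infty)=\max_{\theta\in\mathbb{Z}_{[k-M,k]},\,j}\bar\rho q\,||x_j(\theta)||+\lambda(||u||_\infty)$, which is precisely (\ref{eq:vslf}) with $\lambda_{ij}=\bar\rho q\,\mathrm{id}$ and $\lambda_{iu}=\lambda$. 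The ``admits a solution of length $M+1$'' hypothesis of Definition~\ref{def:length} extends this from trajectories to arbitrary $x_{i_{[-M,0]}}$, just as in Theorem~\ref{the:conLya}. Thus Assumption~\ref{ass:vslf} holds with linear $\alpha_{i1},\alpha_{i2}$ and linear $\lambda_{ij}$.

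The remaining point — and the one place where a genuine choice must be made — is to guarantee the small-gain condition (\ref{eq:small}). I would simply take the horizon $\bar M$ large enough that $a:=\bar\rho q=p\rho^{\bar M}q<1$; this is possible because $q\geq1$ is a fixed constant and $p\rho^{\bar M}\to 0$ as $\bar M\to\infty$, and it is compatible with the earlier requirement $\bar M>\log_\rho\frac{1}{p}$ since $\log_\rho\frac{1}{pq}\geq\log_\rho\frac{1}{p}$. With this choice every gain equals the same contraction, $\lambda_{ij}=a\,\mathrm{id}$, hence any composition $\lambda_{i_1i_2}\circ\cdots\circ\lambda_{i_{r-1}i_r}=a^{\,r-1}\mathrm{id}<\mathrm{id}$ and (\ref{eq:small}) holds for all sequences of length $r\leq\ell$. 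Therefore Assumptions~\ref{ass:vslf} and~\ref{ass:small} (i.e.\ the hypotheses of Theorem~\ref{thm:first}) are in force with $\omega_1,\alpha_{i1},\alpha_{i2}$ linear, which is exactly the stated necessary condition; together with the sufficiency above this gives the equivalence. I expect the only real obstacle to be this coupled choice of the finite-step horizon $\bar M$ — large enough both to make $\bar\rho<1$ (already needed for the single-system Razumikhin estimate) and to absorb the interconnection constant $q$ so that $\bar\rho q<1$ — together with the routine bookkeeping relating $||x_i||$, $||x||$, and $\max_i||x_i||$.
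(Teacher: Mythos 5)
Your proof is correct and follows essentially the same route as the paper: sufficiency via Corollary \ref{co:small}, and for necessity, taking $V_i=||\cdot||$ (so $\alpha_{i1}=\alpha_{i2}=\mathrm{id}$), extracting linear $\omega_1$ from the expISS bound at $k=1$, and choosing a finite waiting horizon $\bar M$ so that the exponential decay absorbs the norm-equivalence constant $q$ from (\ref{eq:q}), which yields uniform linear contraction gains $\lambda_{ij}<\mathrm{id}$ and hence Assumption \ref{ass:small} trivially. The only cosmetic difference is that you work directly from the expISS trajectory estimate (\ref{eq:to})/(\ref{eq:kb}) with the explicit gain $p\rho^{\bar M}q\,\mathrm{id}$, whereas the paper routes the same step through the bound (\ref{eq:an}) inherited from Theorem \ref{the:conLya} and Corollary \ref{corr:exp}; the underlying argument is identical.
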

\begin{proof}
	The sufficiency is proved in Corollary \ref{co:small}. For the necessity, it follows from Theorem \ref{the:conLya} that expISS of (\ref{eq:compositesystem}) implies that $\omega_1$ in (\ref{eq:kbound}) is linear and there exists a Razumikhin-type ISS Lyapunov defined in Definition \ref{def:Razumikhin} for (\ref{eq:compositesystem}), with $\bar{\alpha}$ and $\underline{\alpha}$ in (\ref{eq:uplower}) both linear. 
	Define $V_i : \mathbb{R}^{n_i}\rightarrow \mathbb{R}_+$ by $V_i(\cdot):= ||\cdot||$ for $i \in \{1, \cdots, \ell \}$. Then (\ref{eq:fir}) is satisfied with $\alpha_{i1} = \alpha_{i2} = \mathrm{id}$. For $k \in \mathbb{Z}_+$, there holds
	\begin{flalign}
	& V_i(x_i(k + 1)) = || x_i(k + 1) || \leq || x(k + 1) ||  \nonumber \\
	& \!\leq\! \underline{\alpha}^{-1}\circ c\bar{\kappa}^{k+1}\mathrm{id}\circ\bar{\alpha}\circ\alpha^\ast(|| x_{[-\tau, 0]}||)\! +\! \bar{\lambda}_u(||u||_\infty) \label{eq:mj} \\
	&\!=\! \underline{\alpha}^{-1}\circ c\bar{\kappa}^{k+1}\mathrm{id}\circ\bar{\alpha}\circ\alpha^\ast(\max_{\theta \in \mathbb{Z}_{[-\tau, 0]}}||x(\theta)||)\! +\! \bar{\lambda}_u(||u||_\infty) \nonumber \\
	& \!\leq\! \underline{\alpha}^{-1}\!\circ\! c\bar{\kappa}^{k+1}\mathrm{id}\!\circ\!\bar{\alpha}\!\circ\!\alpha^\ast \big(q\max_{\theta \in \mathbb{Z}_{[-\tau,0]}}V_j(x_j(\theta))\big) \! +\! \bar{\lambda}_u(||u||_\infty) \label{eq:norm1}
	\end{flalign}
	where (\ref{eq:mj}) uses (\ref{eq:an}) in Corollary \ref{corr:exp}, (\ref{eq:norm1}) uses (\ref{eq:q}) and the fact that $V_j(\cdot):= ||\cdot||$, and in (\ref{eq:norm1}) we assume $j = \arg \max_{i \in \{1,\cdots,\ell \} } \{||x_i(\theta)||\}$. As $\bar{\kappa} \in [0, 1)$, $\underline{\alpha}$ and $\bar{\alpha}$ are both linear, from (\ref{eq:norm1}), there must exist an $\bar{M} \in \mathbb{Z}_+$ such that when $k \in \mathbb{Z}_{\geq \bar{M}}$, $\underline{\alpha}^{-1}\circ c\bar{\kappa}^{k+1}\mathrm{id}\circ\bar{\alpha}\circ \alpha^\ast \circ q\mathrm{id} < \rho\mathrm{id}$ with $\rho \in [0,1)$, yielding that for $k \in \mathbb{Z}_{\geq \bar{M}}$,
	\begin{equation*}
	V_i(x_i(k + 1)) \leq  \rho\max_{\substack{\theta \in \mathbb{Z}_{[k-\tau-\bar{M},k - \bar{M}]} \\ j\in \{1,\cdots, \ell \}}}V_j(x_j(\theta))  + \lambda_{iu}(||u||_\infty)
	\end{equation*}
	with $\lambda_{iu} = \bar{\lambda}_u \in \mathcal{K}$. As (\ref{eq:compositesystem}) admits a solution of length $M + 1$ with $M \geq \tau$, let $M = \bar{M} + \tau$, for all $k \in \mathbb{Z}_+$
	\begin{flalign*}
	V_i(x_i(k + 1)) &\leq \max_{\substack{j \in \{1,\cdots,\ell \}\\ \theta \in \mathbb{Z}_{[k - M, k - M + \tau]}}}\rho V(x_j(\theta)) + \lambda_{iu}(||u||_\infty) \nonumber \\ 
	&\leq \max_{\substack{j \in \{1,\cdots,\ell \}\\ \theta \in \mathbb{Z}_{[k - M, k]}}}\rho V(x_j(\theta)) + \lambda_{iu}(||u||_\infty), \nonumber \\
	\end{flalign*}
	and thus Assumption \ref{ass:vslf} holds with $\lambda_{ij}$ in (\ref{eq:vslf}) obeying $\lambda_{ij} = \rho \mathrm{id}$ for all $i,j \in \{1,\cdots, \ell \}$, which further makes the small gain condition in Assumption \ref{ass:small} hold.
\end{proof}



\section{Applications}\label{sec:app}
In this section, we demonstrate the utility of our small gain theorem by applying it to the robust stability analysis of a biased min-consensus protocol introduced in \cite{zhang2017perturbing}, that computes the  shortest distance from each non-source node to its nearest source in undirected connected graphs. The finite time convergence of this protocol with communication delays and separately for asynchronous communication were studied  in  \cite{zhang2017perturbing}. Further, \cite{mo2019robustness}  proved that the estimation error of the protocol is ultimately bounded under additive noise. In this section, 
we  show that the biased min-consensus protocol is globally expISS under simultaneous manifestation of  communication delays, asynchronous communication and additive noise. To this end we  leverage the Razumikhin-type ISS Lyapunov based small gain theorem  of Section \ref{sec:main}. Unless explicitly mentioned, all proofs in this section are in the Appendix.

\subsection{Preliminaries}
The biased min-consensus protocol considers undirected,  connected graphs $\mathcal{G} = (N,E)$ with $N = \{1,2,\cdots, n \}$ the set of nodes and $E$ the set of edges. We call node $i$ a neighbor of node $j$ if there is an edge between $i$ and $j$. Further, $\mathcal{N}(i)$  denotes the set of all neighbors of node $i$. The presence of an edge indicates the existence of a communication link between nodes. We define $w_{ij} > 0$ as the edge weight/length between nodes $i$ and  $j$. Moreover, $i \in \mathcal{N}(j)$ implies $j \in \mathcal{N}(i)$ and $w_{ij} = w_{ji}$ as $\mathcal{G}$ is undirected. A path in $\mathcal{G}$ from $i_0$ to $i_h$ is the ordered set $P_{i_0i_h} = \{i_0, i_1, \cdots, i_h\}$ with $i_{k-1} \in \mathcal{N}(i_k)$ for all $k \in \{1,2,\cdots, h\}$, and the summation of weights of the constituent edges forms the length of the path.
We define $S \subsetneq N$ as the set of sources in $\mathcal{G}$. 


\subsection{Algorithms}
According to the Bellman's principle of optimality \cite{bellman1958routing}, $d_i$, the length of the shortest path between node $i$ and its nearest source obeys
\begin{equation}\label{eq:true}
d_i= \begin{cases}
	\min_{j\in \mathcal{N}(i)}\left \{ d_j+w_{ij}\right \}& i\notin S \\
	0 & i\in S
\end{cases} .
\end{equation}
The following definition characterizes the relation between node $i$ and node $j$ in (\ref{eq:true}).
\begin{definition}\label{def:true}
We define the minimizing $j$ in the first bullet of (\ref{eq:true}) as the true constraining node of $i$. As a node may have multiple true constraining nodes. The set of true constraining
nodes of a node $i \in N\setminus S$ is denoted as $\mathcal{C}(i)$. In particular, a source node does not have any true constraining node.
\end{definition}
As numbers of nodes and edges are both finite, there must exist a $\zeta \in (0,1)$ such that for all $i \in N$ with $\mathcal{N}(i)\setminus\mathcal{C}(i) \neq \emptyset$
\begin{equation}\label{eq:zeta1}
     \frac{ d_i }{ d_l + w_{il} } \leq \zeta, ~ \forall l \in \mathcal{N}(i)\setminus\mathcal{C}(i).
\end{equation}

The effective diameter is defined as follows.
\begin{definition}\label{def:eff}
Consider any sequence
of nodes such that the predecessor of each node is one of its
true constraining nodes. Define $\mathcal{D}$, the effective diameter
of $\mathcal{G}$, as the longest length such a sequence can have in $\mathcal{G}$. In particular, $\mathcal{D}$ has been proven to be finite \cite{mo2019robustness}.
\end{definition}
Let $k = 0$ be the initial time. Define $\dh_i(k)$ as the estimated length from node $i$ from the source set at time $k$. Without time delays and noises, $\dh_i(k + 1)$ in the biased min-consensus protocol obeys
\begin{equation}\label{eq:alg}
\dh_i(k+1)= \begin{cases}
	\min_{j\in \mathcal{N}(i)}\left \{ \dh_j(k)+w_{ij} \right \}& i\notin S \\
	0 & i\in S
\end{cases}.
\end{equation}
In such a protocol, the length estimate of each source is anchored at 0, while each non-source node computes the length estimate by iteratively using its neighbors' previous length estimates and the edge weights in between.

We consider three types of perturbations simultaneously on the protocol as described in (\ref{eq:alg}). First, we permit time delays in the exchange of distance estimates \cite{xia2006inference}. Second, we permit noise in the communication channel over which these exchanges occur, with the \emph{de facto} effect of variations in  edge weights. Such noise includes but is not limited to additive noise \cite{mo2019robustness} or quantization effects \cite{carli2007average}. Finally, we do not assume a central synchronizing clock in the environment, i.e., each node communicates asynchronously \cite{qin2012stationary}. Then (\ref{eq:alg}) can be  interpreted as 
\begin{flalign}\label{eq:algp}
& \dh_i(k+1) =  \nonumber\\
& \begin{cases}
	\underset{j\in \mathcal{N}(i)}{\min} \big\{ \dh_j(k - \tau_{ij}(k))  \\
 + w_{ij}(k- \tau_{ij}(k)) \big\} & i\in (N \setminus S)\cap U(k + 1)  \\
	0 & i \in S \\
	\dh_i(k) & \mathrm{otherwise}
\end{cases}
\end{flalign}
where $\tau_{ij}(k) \in \{0, 1, \cdots, \bar{\tau} \}$ denotes the bounded communication delay between node $i$ and node $j$ at time $k+1$, with $\bar{\tau}$ denoting the maximum communication delay and $\tau_{ij}(k) = 0$ indicating that there is no communication delay between node $i$ and node $j$, $w_{ij}(k)$ denotes the bounded, asymmetric and time-varying edge weight such that $0 < w_{ij}(k) \leq w_{\max}$ and $w_{ij}(k) \neq w_{ji}(k)$, and $U(k)$ denotes the set of nodes which update at time $k$, reflecting the asynchronous communication in the network. The well known principle of channel reciprocity, \cite{Wideband}, ensures that the communication delay between two nodes is identical in either direction. 

In principle asynchrony may prevent a node from updating its distance at all. In this paper we preclude that possibility by adopting a \emph{reasonable} model of asynchronous updates through the following assumption ensuring that at every $k$ all nodes update within a bounded window.
\begin{assumption}\label{ass:delta}
\cite{qin2012stationary} For every $k$, there exists a nonnegative integer $\delta$ such that the set of updating nodes $\bigcup_{k}^{k + \delta}U(k) = N$ for all $k \in \mathbb{Z}_+$.
\end{assumption}
Let $q_i(k) = \min\{j \in \mathbb{Z}_{+}: i \in U(k + 1 - j) \}$. From Assumption \ref{ass:delta}, $q_i(k) \leq \delta$ for all $i \in N\setminus S$. Recall that $\dh_i(k) = 0$ for all $k \in \mathbb{Z}_+$ and all $i \in S$, and $\tau_{ij}(k)$ in (\ref{eq:algp}) obeys $\tau_{ij}(k) \leq \bar{\tau}$ for all $i \in N$ and all $j \in \mathcal{N}(i)$. Given that $k \geq \delta + \bar{\tau}$, (\ref{eq:algp}) can be further written as
\begin{flalign}
&\!\!\! \dh_i(k+1) = \nonumber \\
&\!\!\!\begin{cases}
\underset{j\in \mathcal{N}(i)}{\min} \left \{ \dh_j(k\! -\! \tauh_{ij}(k))\!+\!w_{ij}(k\! -\! \tauh_{ij}(k)) \right \} & i\notin S \\
0 & S 
\end{cases}, \label{eq:algpr}
\end{flalign}
with $\tauh_{ij}(k)$ obeying 
\begin{equation}\label{eq:tauh}
    \tauh_{ij}(k) = \tau_{ij}(k - q_i(k)) + q_i(k) \leq \delta + \bar{\tau},
\end{equation}
As can be seen from (\ref{eq:algpr}), if there are no communication delay, asynchronous communication and the noise on the edge weight, i.e., $q_i(k) = 0$, $\tau_{ij}(k) = 0$ and $w_{ij}(k) = w_{ij}$, then (\ref{eq:algpr}) reduces to (\ref{eq:alg}).

With (\ref{eq:algpr}), we introduce the following definition.
\begin{definition}\label{eq:current}
We call the minimizing $j$ in (\ref{eq:algpr}) the constraining node of $i$ at time $k + 1$.
\end{definition}
\begin{figure}
	\centering
	{\includegraphics[width = 1\columnwidth]{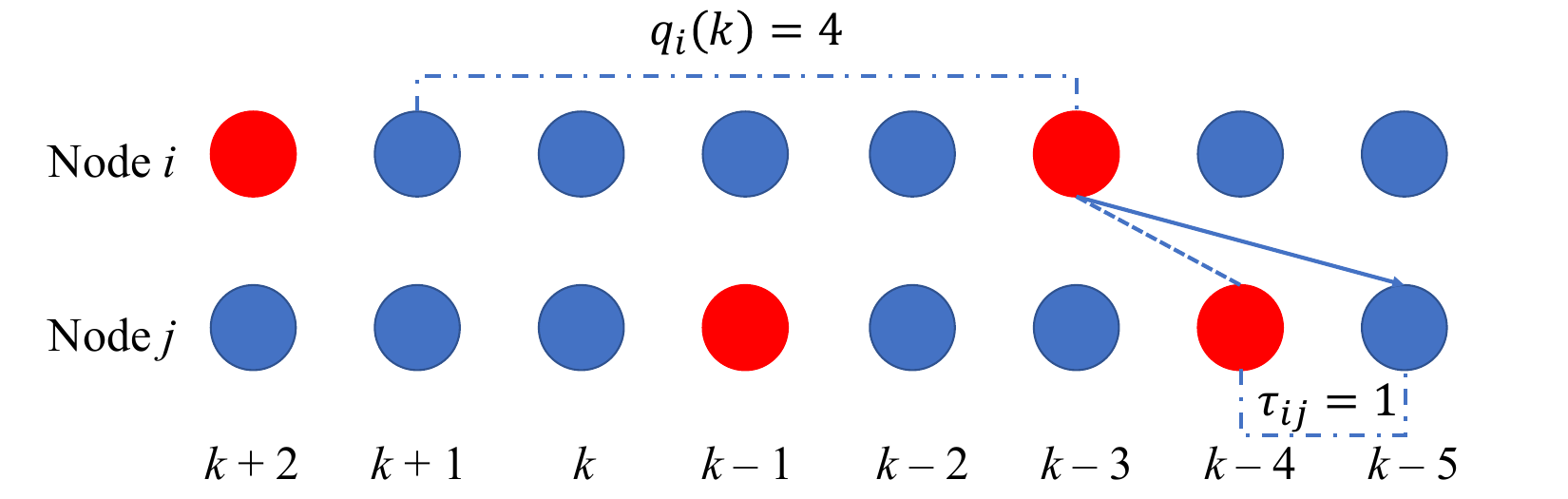}}
	\caption{Illustration of the update mechanism of the biased min-consensus protocol under time delays and communication asynchrony. In this example, the red circle indicates the node updates at this time step while the blue circle indicates the node does not update. The length estimate of node $i$ at time $k+1$ actually uses the length estimate of node $j$ in 6 time steps before, due to the one step time delay between $i$ and $j$ at time $k -3$, i.e., $\tau_{ij}(k - 4) = 1$, and the asynchronous communication $q_i(k) = 4$.
	}
	\label{fig:example}
\end{figure} 
Figure \ref{fig:example} illustrates the update mechanism of the biased min-consensus protocol under time delays and communication asynchrony. In the example, $\delta$ introduced in Assumption \ref{ass:delta} obeys $\delta = 4$ and the maximum communication delay $\bar{\tau} = 3$. As can be seen from Figure \ref{fig:example}, node $i$ does not update at time $k+1$ and $q_i(k) = 4$, leading to $\dh_i(k+1) = \dh_i(k - 3)$. Let node $j$ be the constraining node of node $i$ at time $k+1$, due to that the communication delay between $i$ and $j$ at time $k - 3$ obeys $\tau_{ij}(k - 4) = 1$, $\dh_i(k+1) = \dh_i(k - 3) = \dh_j(k - 5)+w_{ij}(k - 5)$.

The main assumption in this section is as follows.
\begin{assumption}\label{ass:main}
The underlying graph $\mathcal{G}$ is connected and undirected. The perturbed edge weight $w_{ij}(k - \tauh_{ij}(k))$ in (\ref{eq:algpr}) is positive and bounded, and unless mentioned otherwise, $k_0 = 0$ is the initial time. Furthermore,  $\dh_i(0) \geq 0$ for all $i \in N\setminus S$ and $\dh_i(0) = 0$ for $i \in S$.
\end{assumption}
\begin{remark}\label{re:initial}
	The requirement on  initial states in Assumption \ref{ass:main} is only  for simplifying the stability analysis, and  is not a strong assumption as it is shown in Lemma 1 in \cite{mo2018} that there exists a fintie $T$ such that $\dh_i(k) \geq 0$ for all $i \in N$ and $k \geq T$, regardless of the initial states. Further, a simple induction proof on $k$ can prove that $\dh_i(k)$ defined in (\ref{eq:algpr}) obeys $\dh_i(k) \geq 0$ for all $i \in N$ and $k \in \mathbb{Z}_+$.
\end{remark}

\subsection{Stability analysis}
We first transform the perturbed biased min-consensus protocol described by (\ref{eq:algp}) into the interconnected discrete time time-delay nonlinear system in the form of (\ref{eq:composite}). To this end, define the state of (\ref{eq:algp}) as
\begin{equation}\label{eq:state}
\xh(k) := [\xh_1(k), \xh_2(k), \cdots, \xh_n(k)] \in \mathbb{R}^n,
\end{equation}
where $\xh_i(k)$ obeys
\begin{equation}\label{eq:sstate}
    \xh_i(k) = \dh_i(k) - d_i,
\end{equation}
representing the estimation error of node $i$ at time $k$. We further take deviations of edge weights from their nominal values as the input. Define 
\begin{equation}\label{eq:sinput}
\uh_{ij}(k) = w_{ij}(k) - w_{ij}
\end{equation}
as the deviation of $w_{ij}$ at time $k$. Then the input of (\ref{eq:algp}) is given as
\begin{equation}\label{eq:input}
\uh(k) =  (\uh_{ij}(k))_{i \in N, j \in \mathcal{N}(i)} \in \mathbb{R}^{2|E|},
\end{equation}
Consider (\ref{eq:algp}). Clearly, for $i \in S$,  $\xh_i(k) = \xh_i(0) = 0$ for all $k \in \mathbb{Z}_+$. For $i \in N\setminus S$, we consider two cases: 1) $q_i(k - 1) > k - 1$, i.e., node $i$ has never updated yet at time $k$, then there holds $\xh_i(k) = \xh_i(0)$; 2) otherwise, with $\tauh_{ij}(k)$ defined in (\ref{eq:tauh}), it follows from (\ref{eq:algpr}) that
\begin{flalign}
&\xh_i(k) = \dh_i(k) - d_i  \nonumber \\
&= \underset{j\in \mathcal{N}(i)}{\min} \{ \dh_j(k - 1 - \tauh_{ij}(k - 1)) - d_j \nonumber \\ 
&~~~+ w_{ij}(k - 1 - \tauh_{ij}(k-1))
 - w_{ij} - 
 d_i + d_j + w_{ij} \} \nonumber \\
&= \underset{j\in \mathcal{N}(i)}{\min} \{ \xh_j(k - 1 - \tauh_{ij}(k-1)) \nonumber \\
&~~~+ \uh_{ij}(k - 1 - \tauh_{ij}(k - 1)) - d_i + d_j + w_{ij} \} \label{eq:cons}
\end{flalign}
As $d_i,d_j$ and $w_{ij}$ in (\ref{eq:cons}) are structural parameters of graph $\mathcal{G}$, for $i \in \{1, \cdots, n\}$ there holds
\begin{flalign}\label{eq:sys}
\!\xh_i(k \!+\! 1) \!= \! g_i(\xh_1(k \!- \!\tauh_{i1}(k)),\cdots, \xh_n(k\! -\! \tauh_{in}(k)), \uh(k\! - \!d) ) 
\end{flalign}
where $g_i: \mathbb{R}^n\times\mathbb{R}^{2|E|}  \rightarrow \mathbb{R}$ and $\uh(k - d) = (\uh_{ij}(k) - \tauh_{ij}(k))_{i \in N, j \in \mathcal{N}(i)} \in \mathbb{R}^{2|E|}$. From (\ref{eq:true}) and (\ref{eq:cons}), it can be verified that $g_i(0, \cdots, 0) = 0$. In particular, $g_i$ obeys
\begin{flalign}
&\!\!\!\!g_i(\xh_1(k \!- \! \tauh_{i1}(k)),\cdots, \xh_n(k \!-\! \tauh_{in}(k)), \uh(k \!-\! d) ) = \nonumber \\
&\!\!\!\!\begin{cases}
0, & i \in S \\
\underset{j\in \mathcal{N}(i)}{\min} \{ \xh_j(k \! -\! \tauh_{ij}(k))\! -\! d_i \!+ \! \\
\uh_{ij}(k \!- \!\tauh_{ij}(k)) \!+\! d_j \!+\! w_{ij} \}, & i \notin S ~\mathrm{and}~ q_i(k) \!\leq\! k \\
\xh_i(0),& \mathrm{otherwise}\label{eq:map}
\end{cases}.
\end{flalign}
Then the composite map can be defined as 
\begin{equation}\label{eq:compositemap}
    \xh(k+1) = \hat{G}(\xh_{[k - \delta - \bar{\tau}, k]}, \uh(k - d)), \forall k \geq \delta + \bar{\tau},
\end{equation}
where $\xh_{[k - \delta - \bar{\tau}, k]} \in (\mathbb{R}^n)^{\bar{\tau} + \delta + 1}$, $\hat{G}: (\mathbb{R}^n)^{\delta + \bar{\tau} + 1}\times \mathbb{R}^{2|E|} \rightarrow \mathbb{R}^n$ and $\hat{G}(0_{[k - \delta - \bar{\tau}, k]}, 0) = 0$. 

By interpreting (\ref{eq:algpr}) as an interconnected discrete time system of the form in (\ref{eq:sys}), we will show each node admits a Lyapunov-like function as defined in Assumption \ref{ass:vslf}.
\begin{lemma}\label{le:pra}
Suppose Assumptions \ref{ass:delta} and \ref{ass:main} hold. Consider (\ref{eq:sys}) with the input and state defined in (\ref{eq:input}) and (\ref{eq:state}), respectively. Let $V_i(\cdot) = |\cdot|, ~ \forall i \in N.$ Then 
\begin{equation}\label{eq:pro1}
	\alpha_{i1}(|\xi_i|) \leq V_i(\xi_i) \leq \alpha_{i2}(|\xi_i|),~ \forall \xi_i \in \bR
\end{equation}
with $\alpha_{i1} = \alpha_{i2} = \mathrm{id}$, and for $k \geq \bar{\tau} + \delta$
\begin{equation}\label{eq:pro2}
	V_i\big(\xh_i(k\!+\!1)\big)\! \leq \!\lambda_{ij}\big(V_j( \xh_j(k \!- \!\tauh_{ij}(k)) ) \big)\! +\! \lambda_{iu}(||\uh||_\infty),
\end{equation}
with $\lambda_{ij} = \lambda_{iu} = \mathrm{id}$.

In particular, when $i \in S$, $\lambda_{ij}$ in (\ref{eq:pro2}) obeys $\lambda_{ij} = 0$. When $i \notin S$, $j$ in (\ref{eq:pro2}) is a true constraining node of $i$ if $\xh_i(k+1) \geq 0$, and $j$ is the current constraining node of $i$ at time $k + 1$ otherwise. Further,  if $i \notin S$, $\xh_i(k+1) < 0$, and $j$ in (\ref{eq:pro2}) is the current constraining node but not true constraining node of node $i$, then $\lambda_{ij}$ in (\ref{eq:pro2}) obeys $\lambda_{ij} = \lambda_{iu} = \zeta\mathrm{id}$, with $\zeta \in (0,1)$ defined in (\ref{eq:zeta1}).
\end{lemma}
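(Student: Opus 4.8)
The plan is to verify the two Lyapunov conditions directly from the map (\ref{eq:map}), treating the three cases for node $i$ (source, updating non-source, non-updating non-source) separately, and then establishing the finer claims about when $\lambda_{ij}$ and $\lambda_{iu}$ can be taken as $\zeta\mathrm{id}$ rather than $\mathrm{id}$. Condition (\ref{eq:pro1}) is immediate: since $V_i(\cdot)=|\cdot|$, both $\alpha_{i1}$ and $\alpha_{i2}$ equal $\mathrm{id}$.

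For (\ref{eq:pro2}), first I would dispose of the easy cases. If $i\in S$, then $\xh_i(k+1)=0$ by (\ref{eq:map}), so $V_i(\xh_i(k+1))=0$ and the bound holds trivially with $\lambda_{ij}=0$. If $i\notin S$ but $i$ does not update (the ``otherwise'' branch), then $\xh_i(k+1)=\xh_i(k-q_i(k))$ for some $q_i(k)\le\delta$; since we may index the constraining relation through the last time $i$ updated — exactly the reduction already carried out in (\ref{eq:algpr})–(\ref{eq:tauh}) — this reduces to the updating case with the delay $\tauh_{ij}$ absorbing $q_i(k)$. So the crux is the updating non-source case, where $\xh_i(k+1)=\min_{j\in\mathcal N(i)}\{\xh_j(k-\tauh_{ij}(k))-d_i+\uh_{ij}(k-\tauh_{ij}(k))+d_j+w_{ij}\}$. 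Here I split on the sign of $\xh_i(k+1)$. If $\xh_i(k+1)\ge 0$: pick $j\in\mathcal C(i)$ a true constraining node, so $d_i=d_j+w_{ij}$, and the $j$-th term of the min equals $\xh_j(k-\tauh_{ij}(k))+\uh_{ij}(k-\tauh_{ij}(k))$; since $\xh_i(k+1)$ is the minimum, $\xh_i(k+1)\le \xh_j(\cdot)+\uh_{ij}(\cdot)\le |\xh_j(\cdot)|+\|\uh\|_\infty$, and combined with $\xh_i(k+1)\ge0$ this gives $V_i(\xh_i(k+1))\le V_j(\xh_j(k-\tauh_{ij}(k)))+\|\uh\|_\infty$, i.e. $\lambda_{ij}=\lambda_{iu}=\mathrm{id}$. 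If $\xh_i(k+1)<0$: let $j$ be the current constraining node (the minimizer), so $\xh_i(k+1)=\xh_j(k-\tauh_{ij}(k))-d_i+\uh_{ij}(\cdot)+d_j+w_{ij}$; then $|\xh_i(k+1)| = d_i - d_j - w_{ij} - \uh_{ij}(\cdot) - \xh_j(\cdot)$. If $j\in\mathcal C(i)$, then $d_i-d_j-w_{ij}=0$ and this is $\le |\xh_j(\cdot)|+\|\uh\|_\infty$ again. If $j\notin\mathcal C(i)$, then $d_i\le d_j+w_{ij}$ strictly, and here is where (\ref{eq:zeta1}) enters: I would use $\dh_j(k-1-\tauh)=\xh_j(\cdot)+d_j\ge 0$ (Remark \ref{re:initial}) and $w_{ij}(k-1-\tauh)=\uh_{ij}(\cdot)+w_{ij}>0$ to write $\xh_i(k+1)=\dh_j(\cdot)+w_{ij}(\cdot)-d_i\ge -d_i$, and more carefully bound $d_i/(d_j+w_{ij})\le\zeta$ to extract a factor $\zeta$; concretely $|\xh_i(k+1)|\le d_i - \dh_j(\cdot) - w_{ij}(\cdot) \le d_i \le \zeta(d_j+w_{ij})$ won't directly work, so instead I would argue $d_i - (d_j+w_{ij}) \le (\zeta-1)(d_j+w_{ij}) \le (\zeta-1)(\dh_j(\cdot)+w_{ij}(\cdot)) + (1-\zeta)(\xh_j(\cdot)+\uh_{ij}(\cdot))$, leading to $|\xh_i(k+1)|\le \zeta|\xh_j(\cdot)| + \zeta\|\uh\|_\infty$ after collecting terms and using $\dh_j,w_{ij}(\cdot)\ge0$; this yields $\lambda_{ij}=\lambda_{iu}=\zeta\mathrm{id}$.

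The main obstacle I anticipate is the sign-negative, non-true-constraining sub-case: getting the contraction factor $\zeta$ out cleanly requires carefully exploiting both the nonnegativity of the \emph{perturbed} distance estimate $\dh_j$ and the perturbed weight $w_{ij}(k-1-\tauh_{ij})$ together with (\ref{eq:zeta1}), and tracking exactly how the input term $\uh_{ij}$ is scaled — it must come out with coefficient $\zeta$, not $1$, for the claimed form to hold. Everything else (the three top-level cases, the index bookkeeping via $\tauh_{ij}$, and the reduction of non-updating nodes) is routine given the setup already established in (\ref{eq:algpr})–(\ref{eq:tauh}) and Remark \ref{re:initial}.
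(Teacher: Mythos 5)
Your proposal is correct and follows essentially the same route as the paper's proof: trivial verification of (\ref{eq:pro1}), the source and non-updating cases handled via (\ref{eq:algpr})--(\ref{eq:tauh}), a split on the sign of $\xh_i(k+1)$ with the true constraining node used when $\xh_i(k+1)\geq 0$ and the current constraining node when $\xh_i(k+1)<0$, and the factor $\zeta$ extracted in the negative, non-true-constraining subcase by combining (\ref{eq:zeta1}) with the nonnegativity of $\dh_j$ and $w_{ij}(\cdot)$. Your rearrangement $d_i-(d_j+w_{ij})\leq(\zeta-1)(d_j+w_{ij})$ followed by dropping the nonpositive term $(\zeta-1)(\dh_j(\cdot)+w_{ij}(\cdot))$ is algebraically equivalent to the paper's step $-(\dh_j(\cdot)+w_{ij}(\cdot))\leq-\zeta(\dh_j(\cdot)+w_{ij}(\cdot))$, and correctly yields the input coefficient $\zeta$ as claimed.
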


 The next lemma  shows that (\ref{eq:sys}) is globally $\mathcal{K}$-bounded.  
\begin{lemma}\label{le:sysk}
Suppose Assumptions \ref{ass:delta} and \ref{ass:main} hold. Then for all $i \in N$, $g_i$ defined in (\ref{eq:sys}) obeys
\begin{equation}\label{eq:gi}
|g_i(\xi_1,\xi_2,\cdots,\xi_n, \mu)|_\infty \leq \bar{\omega}_1(|\xi|_\infty) + \bar{\omega}_2(||\mu||_\infty) 
\end{equation}
for all $\xi = [\xi_1,\cdots,\xi_n]^{\mathrm{T}} \in \mathbb{R}^n$ and all $\mu \in \mathbb{R}^{2|E|}$,
with $\bar{\omega}_1 = \bar{\omega}_2 = \mathrm{id}$.
\end{lemma}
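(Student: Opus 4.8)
The plan is to establish (\ref{eq:gi}) branch by branch using the closed form of $g_i$ in (\ref{eq:map}), with the single nontrivial branch handled via Bellman's optimality relation (\ref{eq:true}). First I would dispose of the easy branches: if $i\in S$ then $g_i\equiv 0$ and (\ref{eq:gi}) is immediate; the third (``otherwise'') branch is active only when node $i$ has not updated by time $k+1$ --- which, by Assumption \ref{ass:delta}, cannot occur once $k\ge\delta$ --- and there $g_i$ merely returns the propagated initial value $\xh_i(0)$, i.e.\ one of its own state arguments, so $|g_i|_\infty\le|\xi|_\infty$. Thus everything reduces to the branch $i\notin S$ with $q_i(k)\le k$, where by (\ref{eq:map})
\[
g_i(\xi_1,\ldots,\xi_n,\mu)=\min_{j\in\mathcal{N}(i)}\bigl\{\xi_j+\mu_{ij}+(d_j+w_{ij}-d_i)\bigr\},
\]
with $\mu_{ij}$ the entry of $\mu$ carried on edge $(i,j)$.

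For this branch the key observation is that the constant offsets $d_j+w_{ij}-d_i$ are controlled on both sides by (\ref{eq:true}). For the lower bound: the first line of (\ref{eq:true}) gives $d_i\le d_j+w_{ij}$ for every $j\in\mathcal{N}(i)$, so every offset is nonnegative and may be dropped, leaving $g_i\ge\min_j\{\xi_j+\mu_{ij}\}\ge-|\xi|_\infty-\|\mu\|_\infty$. For the upper bound: a non-source node has a nonempty set of true constraining nodes (Definition \ref{def:true}), so fixing $j_0\in\mathcal{C}(i)$ we have $d_{j_0}+w_{ij_0}-d_i=0$ and evaluating the minimum at $j_0$ gives $g_i\le\xi_{j_0}+\mu_{ij_0}\le|\xi|_\infty+\|\mu\|_\infty$. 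Combining the two bounds yields $|g_i|_\infty=|g_i|\le|\xi|_\infty+\|\mu\|_\infty$, i.e.\ (\ref{eq:gi}) with $\bar\omega_1=\bar\omega_2=\mathrm{id}$.

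The main --- essentially the only --- conceptual point is recognizing that $\mathcal{K}$-boundedness with identity gains falls out of the fixed-point relation (\ref{eq:true}) itself: the Bellman inequality $d_i\le d_j+w_{ij}$ makes all shift terms nonnegative (lower bound), while exactness at a true constraining node makes one of them zero (upper bound). The remaining care is bookkeeping --- confirming that the asynchronous/initialization branch returns a state argument rather than an uncontrolled constant, and bounding a single component $|\mu_{ij}|$ by $\|\mu\|_\infty$ consistently with the statement --- but this is routine. Notably no delay structure, no $q_i(k)$, and no ordering of time steps enters, so the bound is genuinely uniform in $k$ as claimed.
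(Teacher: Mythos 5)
Your proof is correct and follows essentially the same route as the paper: the same three-branch split according to (\ref{eq:map}), with the main branch controlled by the Bellman relation (\ref{eq:true}) and the true/current constraining-node structure. The only difference is presentational --- the paper simply invokes the identity-gain inequalities (\ref{eq:tif}) and (\ref{eq:fin}) already established in the proof of Lemma \ref{le:pra}, whereas you re-derive that bound inline as a two-sided estimate instead of the sign case split, which is fine.
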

With Lemma \ref{le:sysk}, it can be readily verified that the composite map of (\ref{eq:sys}), $\hat{G}: (\mathbb{R}^n)^{\delta + \bar{\tau} + 1}\times \mathbb{R}^{2|E|} \rightarrow \mathbb{R}^n$, introduced in (\ref{eq:compositemap}), obeys that for all $\xi \in (\mathbb{R}^n)^{\delta + \bar{\tau} + 1}$ and all $\mu \in \mathbb{R}^{2|E|}$,
\begin{equation}\label{eq:kbgh}
| \hat{G}(\xi, \mu) |_\infty \leq |\xi|_\infty + |\mu|_\infty.
\end{equation}

While Lemma \ref{le:pra} implies that Assumption \ref{ass:vslf} holds for (\ref{eq:sys}). The following lemma further furnishes the state estimate of each node.
\begin{lemma}\label{le:small}
Suppose Assumptions \ref{ass:delta} and \ref{ass:main} hold. Let $M = \mathcal{D}(\delta + \bar{\tau}) + ( \mathcal{D}- 1)$, with $ \mathcal{D}, \bar{\tau}$ and $\delta$ defined in Definition \ref{def:eff}, (\ref{eq:algp}) and Assumption \ref{ass:delta}, respectively. Then for all $k \geq M$ and all $i \in N$ there holds
\begin{equation}\label{eq:small1}
V_i\big(\xh_i(k+1)\big) \leq \bar{\lambda}_{ij}\Big(V_j\big( \xh_j(k - \theta ) \big) \Big) + \bar{\lambda}_{iu}(||\uh||_\infty).
\end{equation}
where $V_i := |\cdot|$, $j \in N$, $\theta \in \mathbb{Z}_{[ \mathcal{D} - 1, M]}$, $\zeta\mathrm{id} \geq \bar{\lambda}_{ij} \in \mathcal{K}_{\infty}$ with $\zeta$ defined in (\ref{eq:zeta1}), and $\ \mathcal{D}\mathrm{id} \geq\bar{\lambda}_{iu} \in \mathcal{K}_{\infty}$.
\end{lemma}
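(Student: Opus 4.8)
The idea is to iterate the one-step, per-edge inequalities of Lemma \ref{le:pra} backward along a chain of constraining nodes, and to argue --- via the finiteness of the effective diameter $\mathcal{D}$ (Definition \ref{def:eff}) --- that after at most $\mathcal{D}$ such iterations the composed comparison function already carries a contraction factor $\zeta$, while the accumulated delay places the resulting time index $k-\theta$ inside $\mathbb{Z}_{[\mathcal{D}-1,M]}$.

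Fix $i\in N$ and $k\geq M$. If $i\in S$, then $\xh_i(k+1)=0$ and (\ref{eq:small1}) holds trivially with $j=i$, $\theta=\mathcal{D}-1$, $\bar{\lambda}_{ij}=\zeta\mathrm{id}$, $\bar{\lambda}_{iu}=\mathcal{D}\mathrm{id}$ (note $k-\theta\geq M-\mathcal{D}+1=\mathcal{D}(\delta+\bar{\tau})\geq 0$). For $i\notin S$ I set $j_0=i$, $t_0=k+1$, and recursively let $j_{r+1}$ and $\tauh_{j_rj_{r+1}}(t_r-1)$ be the constraining node and delay furnished by Lemma \ref{le:pra} applied at node $j_r$, time $t_r$, so that $t_{r+1}=t_r-1-\tauh_{j_rj_{r+1}}(t_r-1)$; this continues as long as $j_r\notin S$, and $t_r$ decreases by at least $1$ and at most $1+\delta+\bar{\tau}$ per step. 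By Lemma \ref{le:pra} each edge $j_r\to j_{r+1}$ is either \emph{true} --- comparison function $\mathrm{id}$ and $j_{r+1}\in\mathcal{C}(j_r)$, which occurs whenever $\xh_{j_r}(t_r)\geq 0$ --- or \emph{contracting} --- comparison function $\zeta\mathrm{id}$, which requires $\xh_{j_r}(t_r)<0$ and $j_{r+1}\notin\mathcal{C}(j_r)$.

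A run of true edges is a path along the true-constraining relation, hence has at most $\mathcal{D}$ nodes by Definition \ref{def:eff} (equivalently, the optimal distances strictly decrease, $d_{j_r}=d_{j_{r+1}}+w_{j_rj_{r+1}}>d_{j_{r+1}}$, so it never cycles). Therefore along the chain there is an index $m\leq\mathcal{D}$ with edges $1,\dots,m-1$ true and edge $m$ either contracting or reaching a source. If at that point $t_0-t_m<\mathcal{D}$, I prolong the chain --- every further edge has comparison function $\leq\mathrm{id}$, and by Lemma \ref{le:pra} even a source has such an edge --- until $t_0-t_{m'}\geq\mathcal{D}$; since each step lowers $t_r$ by at least $1$, this costs at most $m'\leq\mathcal{D}$ steps in total.

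Composing (\ref{eq:pro2}) along $j_0\to\cdots\to j_{m'}$, and using that all comparison functions are linear and $\leq\mathrm{id}$, that the product of the multiplicative parts is $\leq\zeta$ (one contracting edge, or a source where the iterate vanishes --- absorbing the vanishing term into $\zeta V_{j_{m'}}(\cdot)\geq 0$), and that the additive part is a sum of at most $m'\leq\mathcal{D}$ nonnegative terms each $\leq||\uh||_\infty$, yields $V_i(\xh_i(k+1))\leq\zeta V_{j_{m'}}(\xh_{j_{m'}}(t_{m'}))+\mathcal{D}||\uh||_\infty$. Taking $j:=j_{m'}$, $\theta:=k-t_{m'}$, $\bar{\lambda}_{ij}:=\zeta\mathrm{id}\in\mathcal{K}_\infty$, $\bar{\lambda}_{iu}:=\mathcal{D}\mathrm{id}\in\mathcal{K}_\infty$ gives (\ref{eq:small1}); $\theta\geq\mathcal{D}-1$ follows from $t_0-t_{m'}\geq\mathcal{D}$, $\theta\leq M$ from $t_0-t_{m'}\leq\mathcal{D}(1+\delta+\bar{\tau})=M+1$, and applicability of Lemma \ref{le:pra} at $t_0,\dots,t_{m'-1}$ from $t_{m'-1}\geq(k+1)-(\mathcal{D}-1)(1+\delta+\bar{\tau})\geq(M+1)-(\mathcal{D}-1)(1+\delta+\bar{\tau})=1+\delta+\bar{\tau}$ --- which is exactly the computation forcing $M=\mathcal{D}(\delta+\bar{\tau})+(\mathcal{D}-1)$. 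The main obstacle I anticipate is the combinatorial step: bounding a run of true-constraining edges --- taken across arbitrary, non-monotone sign patterns of $\xh$ along the chain --- by $\mathcal{D}$ through Definition \ref{def:eff}, while keeping the delay accounting tight enough that $\theta$ never escapes $[\mathcal{D}-1,M]$ nor pushes any invocation of Lemma \ref{le:pra} outside its range of validity.
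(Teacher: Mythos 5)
Your proposal is correct and follows essentially the same route as the paper: iterate the per-step estimate of Lemma \ref{le:pra} along the chain of constraining nodes, invoke Definition \ref{def:eff} to conclude that within at most $\mathcal{D}$ steps a non-$\mathrm{id}$ (i.e., $\zeta\mathrm{id}$ or source) comparison function must appear, bound the accumulated input term by $\mathcal{D}\mathrm{id}$, and account for the per-step time drop of between $1$ and $1+\delta+\bar{\tau}$ to place $\theta$ in $\mathbb{Z}_{[\mathcal{D}-1,M]}$. The only difference is presentational --- you stop at the first contracting/source edge and prolong the chain for the lower bound on $\theta$, whereas the paper composes exactly $\mathcal{D}$ steps and obtains the factor $\zeta$ by contradiction --- and the combinatorial worry you flag is handled exactly as in the paper, since each $\mathrm{id}$-edge is a true-constraining edge regardless of sign pattern.
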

Lemma \ref{le:small} implies that Assumption \ref{ass:small} also holds for (\ref{eq:sys}). Then expISS of (\ref{eq:sys}) can be proved using Corollary \ref{corr:exp}.
\begin{theorem}\label{thm:expISS}
Suppose Assumptions \ref{ass:delta} and \ref{ass:main} hold, with $ \mathcal{D}, \delta$ and $\bar{\tau}$ defined in Definition \ref{def:eff}, Assumption \ref{ass:delta} and (\ref{eq:algp}), respectively. Let $k_0 = M$ with $M =  \mathcal{D}(\delta + \bar{\tau}) + ( \mathcal{D} - 1)$ be the initial time. Then (\ref{eq:sys}) is expISS.
\end{theorem}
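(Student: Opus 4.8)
The plan is to recognize Theorem~\ref{thm:expISS} as an assembly result: Lemmas~\ref{le:pra}, \ref{le:sysk} and \ref{le:small} have already done the analytical work, so what remains is to check, piece by piece, that the interconnected system~(\ref{eq:sys}) satisfies the hypotheses of the small-gain machinery of Section~\ref{sec:main}, and then to invoke Corollary~\ref{co:small} (equivalently, Theorem~\ref{thm:first} followed by Corollary~\ref{corr:exp}). A preliminary step is the time shift announced in the statement: since the per-step estimates of Lemmas~\ref{le:pra}--\ref{le:small} and the composite dynamics~(\ref{eq:compositemap}) are only guaranteed for $k \ge M$ (resp.\ $k \ge \delta+\bar{\tau}$), I would declare $k_0 = M$ with $M = \mathcal{D}(\delta+\bar{\tau})+(\mathcal{D}-1)$ to be the initial time, so that in the shifted index every cited inequality holds for all $k \in \mathbb{Z}_+$ and the bounded segment of past states required as an initial condition is exactly the history produced by running the protocol on the original interval $[0,M]$.

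Next I would verify Assumption~\ref{ass:vslf} for~(\ref{eq:sys}) with $V_i(\cdot)=|\cdot|$. Condition~(\ref{eq:fir}) is immediate from~(\ref{eq:pro1}) of Lemma~\ref{le:pra} with $\alpha_{i1}=\alpha_{i2}=\mathrm{id}$, which are linear. For the dissipative inequality~(\ref{eq:vslf}) I would use Lemma~\ref{le:small}: for every $i\in N$ and every $k\ge M$ it produces a constraining node $j\in N$ and an index $\theta\in\mathbb{Z}_{[\mathcal{D}-1,M]}$ with $V_i(\xh_i(k+1))\le \bar{\lambda}_{ij}(V_j(\xh_j(k-\theta)))+\bar{\lambda}_{iu}(\|\uh\|_\infty)$, where $\bar{\lambda}_{ij}\le\zeta\,\mathrm{id}$ and $\bar{\lambda}_{iu}\le\mathcal{D}\,\mathrm{id}$. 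Since $k-\theta\in\mathbb{Z}_{[k-M,k]}$, choosing the constraining comparison functions uniformly as $\lambda_{ij}:=\zeta\,\mathrm{id}$ (a linear $\mathcal{K}_\infty$ function) for all $i,j$, and $\lambda_{iu}:=\bar{\lambda}_{iu}\in\mathcal{K}$, makes the right-hand side of~(\ref{eq:vslf}) dominate $\bar{\lambda}_{ij}(V_j(\xh_j(k-\theta)))+\bar{\lambda}_{iu}(\|\uh\|_\infty)$; the source nodes are covered for free because $V_i(\xh_i(\cdot))\equiv 0$ for $i\in S$, so the left side is zero. Finally, $M\ge\delta+\bar{\tau}$ (as $\mathcal{D}\ge 1$), and $\delta+\bar{\tau}$ is the maximum delay of the composite map~(\ref{eq:compositemap}), so the requirement $M\ge\tau$ of Assumption~\ref{ass:vslf} is met.

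Then I would check Assumption~\ref{ass:small}: with $\lambda_{ij}=\zeta\,\mathrm{id}$ and $\zeta\in(0,1)$, every composition $\lambda_{i_1i_2}\circ\cdots\circ\lambda_{i_{r-1}i_r}=\zeta^{\,r-1}\mathrm{id}<\mathrm{id}$, so the cyclic small-gain condition~(\ref{eq:small}) holds. It remains to confirm the linearity requirements of Corollary~\ref{co:small}: $\alpha_{i1},\alpha_{i2}$ are linear by Lemma~\ref{le:pra}, and the composite map $\hat{G}$ of~(\ref{eq:sys}) is globally $\mathcal{K}$-bounded with a linear $\omega_1$, since~(\ref{eq:kbgh})—a consequence of Lemma~\ref{le:sysk}—gives $|\hat{G}(\xi,\mu)|_\infty\le|\xi|_\infty+|\mu|_\infty$ (and linearity is invariant under equivalence of norms). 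With Assumptions~\ref{ass:vslf} and~\ref{ass:small} in force and all of $\alpha_{i1},\alpha_{i2},\omega_1$ linear, Theorem~\ref{thm:first} yields a dissipative-form Razumikhin-type ISS Lyapunov function for~(\ref{eq:sys}), and Corollary~\ref{co:small} upgrades ISS to expISS, completing the proof.

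I do not expect a genuine obstacle here, precisely because the contraction has been manufactured inside Lemma~\ref{le:small} (the chain argument along $\mathcal{D}$ successive true constraining nodes, which converts the per-step bound $\lambda_{ij}=\mathrm{id}$ of Lemma~\ref{le:pra} into the genuine factor $\zeta<1$ over a window of length $M$). The only points needing care are the ones where a reader could slip: (i) matching the single pair $(j,\theta)$ of Lemma~\ref{le:small} to the max-over-$(j,\theta)$ form of~(\ref{eq:vslf}), which is sound because~(\ref{eq:vslf}) is merely an upper bound and enlarging the maximum preserves it; (ii) ensuring the window $\mathbb{Z}_{[k-M,k]}$ contains $k-\theta$; and (iii) the simultaneous handling of the time shift $k_0=M$ and of the source nodes with the uniform choice $\lambda_{ij}=\zeta\,\mathrm{id}$.
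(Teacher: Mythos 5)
Your proposal is correct and follows essentially the same route as the paper: verify global $\mathcal{K}$-boundedness via Lemma~\ref{le:sysk}, Assumptions~\ref{ass:vslf}--\ref{ass:small} via Lemmas~\ref{le:pra} and~\ref{le:small} with linear $\alpha_{i1}=\alpha_{i2}=\mathrm{id}$ and gains bounded by $\zeta\,\mathrm{id}$, and then invoke the small-gain theorem together with the expISS corollary with $k_0=M$ as the initial time. The only cosmetic difference is that the paper makes the Lyapunov construction explicit (choosing $\sigma_i=\mathrm{id}$ so that $V=|\cdot|_\infty$) before citing Corollary~\ref{corr:exp}, whereas you route the same facts through Corollary~\ref{co:small}, which is the identical combination.
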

\begin{proof}
Obviously (\ref{eq:compositemap}) holds for all $k \geq k_0 =  \mathcal{D}(\delta + \bar{\tau}) + ( \mathcal{D} - 1)$ and (\ref{eq:compositemap}) admits a solution of length $M \geq \delta + \bar{\tau}$ (per Definition \ref{def:length}). Further, Lemma \ref{le:sysk} and Lemma \ref{le:small} prove that (\ref{eq:sys}) is globally $\mathcal{K}$-bounded and Assumptions \ref{ass:vslf}-\ref{ass:small} hold for (\ref{eq:sys}), respectively. As $\bar{\lambda}_{ij}$ defined in (\ref{eq:small1}) obeys $\bar{\lambda}_{ij} \leq \zeta\mathrm{id}$ with $\zeta \in (0,1)$ defined in (\ref{eq:zeta1}), the linear $\mathcal{K}_{\infty}$ function $\sigma_i$ introduced in (\ref{eq:smallr}) can be chosen as 
\begin{equation}\label{eq:sigma}
	\sigma_i = \mathrm{id}, ~ \forall i \in \{1,\cdots, n\}.
\end{equation}
Then by defining $V(\xi) = \max_{i \in N} \{\sigma_i^{-1}(V_i(\xi_i))\} = \max_{i \in N} \{V_i(\xi_i)\}$, i.e., $V = |\cdot|_\infty$, with $\xi \in \mathbb{R}^n$ and $V_i = |\cdot|$ defined in (\ref{eq:pro2}), it follows from Theorem \ref{thm:first} that $V$ is a Razumikhin-type ISS Lyapunov function of (\ref{eq:sys}), and thus (\ref{eq:sys}) is ISS with $k_0$ as the initial time. Further, as $V = |\cdot|_\infty$ obeys $\underline{\alpha}(|\xi|_\infty) \leq V(\xi) \leq \bar{\alpha}(|\xi|_\infty)$ for all $\xi \in \mathbb{R}^n$ with $\underline{\alpha} = \bar{\alpha} = \mathrm{id}$ defined in (\ref{eq:uplower}), and by (\ref{eq:kbgh}) $\hat{G}$ is globally $\cK$-bounded with $\omega_1$ in (\ref{eq:kbound}) obeying $\omega_1 = \mathrm{id}$, it follows from Corollary \ref{corr:exp} that (\ref{eq:sys}) is also expISS with $k_0$ as the initial time.
\end{proof}
\begin{figure}
	\centering
	{\includegraphics[width = 1\columnwidth]{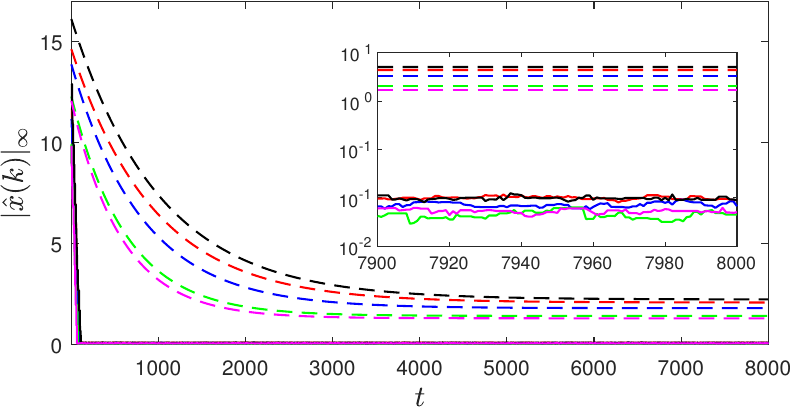}}
	\caption{Plot of the largest estimation error and its corresponding error bound for 5 runs of 500 nodes randomly distributed in a $4 \times 2~ \mathrm{km}^2$ area, communicating over a 0.35 km range, with asynchronous communication, communication delays and noisy measurement. The solid line represents $|\xh(k)|_\infty$, the largest estimation error among all nodes, while the dashed line in the same color represents the corresponding error bound. In particular, the partial enlarged view uses a base-10 logarithmic scale on the y-axis.}
	\label{fig:expISS}
\end{figure} 
\subsection{Refinement of the error bound}
In this subsection, we   give the upper bound of the estimation error of the biased min-consensus protocol in the form of (\ref{eq:expISS}) and (\ref{eq:rate}), with $k_0 = 0$ the initial time. 

From Theorem \ref{thm:expISS}, the linear $\mathcal{K}_{\infty}$ function $\sigma_i$ introduced in (\ref{eq:smallr}) in Theorem \ref{thm:first} is chosen as $\sigma_i = \mathrm{id}$ for all $i \in \{1,\cdots, n\}$. As $\bar{\lambda}_{ij}$ and $\bar{\lambda}_{iu}$ in (\ref{eq:small1}) obey $\bar{\lambda}_{ij} \leq \zeta\mathrm{id}$ and $\bar{\lambda}_{iu} \leq  \mathcal{D}\mathrm{id}$, respectively, it follows from Theorem \ref{thm:first} that the Razumikhin-type ISS Lyapunov function for (\ref{eq:sys}) obeys that for all $k \geq M$ with $M =  \mathcal{D}(\delta + \bar{\tau}) + ( \mathcal{D} - 1)$,
\begin{equation}\label{eq:Lyanew}
	V( \xh(k+1) ) \leq \max_{\theta \in \mathbb{Z}_{[k - M, k]}} \zeta V(\xh(\theta)) +  \mathcal{D}|| \uh ||_\infty
\end{equation}
with $V(\cdot) = |\cdot|_\infty$. Implementing (\ref{eq:small1}) in Lemma \ref{le:small} on (\ref{eq:Lyanew}) repeatedly until it cannot be applied, it
follows from (\ref{eq:uag})-(\ref{eq:ule1}) in Theorem \ref{the:suffISS} that $V(\xh(k+1))$ obeys
\begin{flalign}\label{eq:fu}
	V(\xh(k+1)) \leq \zeta^{\frac{k - j + 1}{M + 1}}( V(\xh(j)) ) + \frac{\mathcal{D}}{1 - \zeta} ||\uh||_\infty
\end{flalign}
where $j \in \mathbb{Z}_{[0, M]}$. Applying (\ref{eq:pro2}) in Lemma \ref{le:pra} repeatedly on $V(\xh(j))$ until it is related to the initial state, there holds
\begin{equation}\label{eq:fur}
    V(\xh(j)) \leq V\Big(\xh\big(j - \sum_{i = 1}^{q}(m_i + 1) \big)\Big) + q||\uh||_\infty
\end{equation}
where $m_i \in \mathbb{Z}_{[0, \delta + \bar{\tau}]}$, $l := j - \sum_{i = 1}^{q}(m_i + 1) \geq 0$ with $V(\xh(l)) = |\xh(l)|_\infty = V(\xh_i(l)) = V(\xh_i(0))$ (assume $i = \arg \max_{j \in N}\{\xh_j(l) \}$). Further, we have $q \leq \mathcal{D} - 1$ as otherwise it follows from Lemma \ref{le:small} that (\ref{eq:small1}) can continue to be applied. Putting (\ref{eq:fur}) into (\ref{eq:fu}), we can obtain
\begin{flalign}
	&V(\xh(k+1))  \nonumber\\
	&\!\leq \! \zeta^{\frac{k - j + 1}{M + 1}}\big( V(\xh_i(0)) + (\mathcal{D} - 1)||\uh||_\infty \big) +\frac{1}{1 - \zeta}\mathcal{D}(||\uh||_\infty) \nonumber\\
	&\!\leq \! \zeta^{\frac{k \!-\! M + 1}{M\! + \!1}}\big(|\xi|_\infty \!+\! (\mathcal{D} \!-\! 1)||\uh||_\infty \big) \!+\!\frac{1}{1\! -\! \zeta}\mathcal{D}||\uh||_\infty\label{eq:normt} \\
	&\!\leq \! \zeta^{\frac{k \!+\! 1}{M \!+\! 1}}\zeta^{\frac{-M}{M \!+\! 1}}(|\xi|_\infty)\!+\!\big(\zeta^{\frac{1}{M \!+\! 1}}(\mathcal{D} \!-\! 1)\! +\! \frac{1}{1 \!- \!\zeta}\mathcal{D}\big)||\uh||_\infty \label{eq:nt1}
\end{flalign}
where (\ref{eq:normt}) uses the fact that $j \in \mathbb{Z}_{[0, M]}$, $V = |\cdot|_\infty$ and $\xi \in \mathbb{R}^n$ is the initial state, and (\ref{eq:nt1}) results from $k \geq M$. With (\ref{eq:nt1}), for $k \geq M$, $|\xh(k+1)|_\infty$ can be characterized by
\begin{equation}\label{eq:exp1}
	|\xh(k + 1)|_\infty \leq \beta(| 
	\xi|_\infty, k + 1) + \bar{\lambda}_u(||\uh||_\infty)
\end{equation}
where $\xi \in (\mathbb{R}^n)^{M + 1}$ is the initial state, $\beta(r,t) = \bar{\zeta}^t\zeta^{\frac{-M}{M + 1}}r$ with $\bar{\zeta} = \zeta^{\frac{1}{M + 1}}$ and $\bar{\lambda}_u = (\zeta^{\frac{1}{M + 1}}( \mathcal{D} - 1) + \frac{1}{1 - \zeta} \mathcal{D})\mathrm{id} \in \mathcal{K}_{\infty}$.
\section{Simulation}\label{sec:simulations}
In this section, we verify the theoretical results in previous sections through simulations. We run our simulations in a $4\mathrm{km} \times 2\mathrm{km}$ area, where 500 nodes, including a source node, are randomly placed, communicating over a 0.35 km radius. The distance between nodes is measured in hop counts, i.e., in (\ref{eq:alg}) $w_{ij} = 1$ for all $i,j \in N$. We consider the asynchronous communication, communication delays and noisy measurement as the perturbations. The simulation result is shown in Figure \ref{fig:expISS}. In this case, the time-varying edge weight $w_{ij}(k)$ is randomly distributed between $[0.99, 1.01]$ due to the noisy measurement, $\delta$ in Assumption \ref{ass:delta} has an expectation of 2, and the maximum communication delay $\bar{\tau}$ is 2. We run the simulations for 5 trials, the effective diameter $ \mathcal{D}$ defined in Definition \ref{def:eff} and $\zeta$ defined in (\ref{eq:zeta1}) for those 5 trials are $14, 13, 12, 15, 11$ and $0.9286, 0.9231,	0.9091,	0.9286, 0.9091$, respectively, and the initial states for those 5 trials are randomly distributed between $[0, d_{\max}/2]$ where $d_{\max} = \max_{i \in N}\{d_i\}$ with $d_i$ the length of the shortest path from $i$ to the source node as defined in (\ref{eq:true}). As can be seen from Figure \ref{fig:expISS}, under the above perturbations, for all the trials, $|\xh(k)|_\infty$, the largest estimation error among all nodes (in solid line) will not converge to zero but drops exponentially fast below an upper bound (in dashed line with the same color) characterized by (\ref{eq:exp1}). However, due to the conservative nature of the Lyapunov-based approach, there is still a gap between the estimation error and its upper bound. While the estimation error rapidly drops to around 0.1 within 200 rounds, the error bound exponentially decreases to between 1 to 2. 

\section{Conclusion}\label{sec:conclusion}
In this paper, we present both Lyapunov and small gain approaches for the ISS of discrete time time-delay systems. Specifically the converse theorems of both approaches with respect to expISS of discrete time time-delay systems are also provided. By leveraging the proposed Lyapunov-based small gain theorem, we prove that the biased min-consensus protocol, which is used to compute the length of the shortest path from each non-source node to its nearest source, is globally exponentially input-to-state stable under three perturbations: 1) time-varying edge weights; 2) communication delay; and 3) asynchronous communication. Simulations are provided to verify the validity of the theoretical results. Our future work include two directions: 1) provide converse Lyapunov theorems for the ISS of discrete time time-delay systems; 2) establish the Krasovskii-type Lyapunov-based small gain theorems for the discrete time time-delay systems.

As with most results involving ISS, ours involves small gain like theorems. It is well known that there is an equivalence between the classical small gain and passivity theorems, \cite{anderson1972small}. Passivity in turn has been useful in proving stability of adaptive, \cite{anderson1986stability,dasgupta1986output}, and multiagent systems \cite{Passivity}. A future direction could also be to formulate passivity type theorems, perhaps by using a variation of dissipative type Lyapunov functions, e.g. by having  the inner product of input and output in the stead of their norms.

\bibliographystyle{automaticanew}        
\bibliography{refs}

\section*{Appendix}\label{App}

\noindent
{\bf Proof of Lemma \ref{le:pra}:}
Obviously $V_i$ satisfies (\ref{eq:pro1}) with $\underline{\alpha}_i = \bar{\alpha}_i = \mathrm{id}$. For $i \in S$, it follows from (\ref{eq:true}), (\ref{eq:algp}) and (\ref{eq:sstate}) that for all $k \in \mathbb{Z}_+$
$V_i(\xh_i(k + 1)) = | \dh_i(k + 1) - d_i | = 0,$
and thus $V_i$ satisfies (\ref{eq:pro2}) for $i \in S$ with $\lambda_{ij}$ in (\ref{eq:pro2}) obeying $\lambda_{ij} = 0$. For $i \notin S$, 
we consider two cases: 1) $\dh_i(k + 1) \geq d_i$; 2) $\dh_i(k + 1) < d_i$. In the former case, assume $j$ is the true constraining node of $i$ (per Definition \ref{def:true}), then it follows from (\ref{eq:algpr}) that for all $k \geq \bar{\tau} + \delta$, $V_i(\xh_i(k + 1))$ obeys 
\begin{flalign}
&V_i(\xh_i(k + 1)) = |\dh_i(k + 1) - d_i | = \dh_i(k + 1) - d_i \nonumber \\
& \leq   \dh_j(k - \tauh_{ij}(k) ) + w_{ij}(k - \tauh_{ij}(k)) - d_j - w_{ij} \label{eq:jit} \\
& \leq  |\dh_j(k - \tauh_{ij}(k)) - d_j| + |w_{ij}(k - \tauh_{ij}(k)) - w_{ij}| \nonumber \\
& \leq   \lambda_{ij}\big(V_j(\xh_j(k - \tauh_{ij}(k)))\big) + \lambda_{iu}(||\uh||_\infty) \label{eq:tif}
\end{flalign}
where (\ref{eq:jit}) comes from (\ref{eq:algpr}) and  (\ref{eq:true}), and the equality in (\ref{eq:jit}) holds if $j$ is also the constraining node of $i$ at time $k + 1$, and in (\ref{eq:tif}) $\lambda_{ij} = \lambda_{iu} = \mathrm{id}$. 

In the latter case, assume $j$ is the current constraining node of $i$ at time $k + 1$, we obtain
\begin{flalign}
&V_i(\xh_i(k + 1))  = |\dh_i(k + 1) - d_i |  =  d_i - \dh_i(k + 1) \nonumber \\ 
& \leq d_j + w_{ij} - \dh_j(k - \tauh_{ij}(k) ) - w_{ij}(k - \tauh_{ij}(k)) \label{eq:utr} \\
& \leq |\dh_j(k - \tauh_{ij}(k)) - d_j| + |w_{ij}(k - \tauh_{ij}(k)) - w_{ij}| \nonumber \\
& \leq   \lambda_{ij}\big(V_j(\xh_j(k - \tauh_{ij}(k)))\big) + \lambda_{iu}(||\uh||_\infty)\label{eq:fin} 
\end{flalign}
where (\ref{eq:utr}) uses (\ref{eq:true}), the equality in (\ref{eq:utr}) holds if $j$ is also a true constraining node of $i$, and in (\ref{eq:fin}) $\lambda_{ij} = \lambda_{iu} = \mathrm{id}$.
Specifically, if $j$ in (\ref{eq:utr}) is not a true constraining node of node $i$, from (\ref{eq:zeta1}), (\ref{eq:utr}) becomes
\begin{flalign}
	&V_i(\xh_i(k + 1)) \nonumber \\
	&\leq \zeta(d_j + w_{ij}) - \dh_j(k - \tauh_{ij}(k) ) - w_{ij}(k - \tauh_{ij}(k)) \label{eq:usezeta} \\
	& < \zeta(d_j + w_{ij}) - \zeta(\dh_j(k - \tauh_{ij}(k) ) + w_{ij}(k - \tauh_{ij}(k))) \label{eq:uis} \\
	&\leq \zeta|\dh_j(k - \tauh_{ij}(k)) - d_j| + \zeta|w_{ij}(k - \tauh_{ij}(k)) - w_{ij}| \nonumber \\
	& \leq   \lambda_{ij}\big(V_j(\xh_j(k - \tauh_{ij}(k)))\big) + \lambda_{iu}(||\uh||_\infty)\label{eq:spe}
\end{flalign}
where (\ref{eq:usezeta}) uses (\ref{eq:zeta1}), (\ref{eq:uis}) uses that fact that both $\dh_j(k - \tauh_{ij}(k))$ and $w_{ij}(k - \tauh_{ij}(k))$ are nonnegative (per Assumption \ref{ass:main} and Remark \ref{re:initial}), and in (\ref{eq:spe}) $\lambda_{ij} = \lambda_{iu} = \zeta\mathrm{id}$.

\noindent
{\bf Proof of Lemma \ref{le:sysk}:} 
According to (\ref{eq:algpr}), we consider three cases: 1) $i \in S$; 2) $i \notin S$ and $q_i(k) > k$ ($i$ has not updated yet); 3) $i \notin S$ and $q_i(k) \leq k$. For the first two cases, it follows from (\ref{eq:map}) that either $\xh_i(k + 1) = 0$ for all $k$ or $\xh_i(k + 1) = \xh_i(0)$ for some $k < \delta$, and thus (\ref{eq:gi}) holds trivially. For the last case, it follows from (\ref{eq:tif}) and (\ref{eq:fin}) in Lemma \ref{le:pra} that $\xh_i(k)$ obeys
$
|\xh_i(k + 1)|_\infty 
\leq |\xh_j(k - \tauh_{ij}(k))|_\infty + ||\uh||_\infty,
$
and thus our claim follows.

\noindent
{\bf Proof of Lemma \ref{le:small}:} From (\ref{eq:pro2}) in Lemma \ref{le:pra}, for $k \geq M$, there holds
\begin{flalign}
	&V_{i_0}\big(\xh_{i_0}(k+1)\big) \nonumber \\
	& \leq \lambda_{i_0i_1}\big(V_{i_1}( \xh_{i_1}(k - \tauh_{i_0i_1}(k)) ) \big) + \lambda_{i_0u}(||\uh||_\infty) \nonumber \\ 	
	& \leq \lambda_{i_0i_1}\Big( \lambda_{i_1i_2}\big( \lambda_{i_1u}(||\uh||_\infty) + V_{i_2}( \xh_{i_2}(k - \tauh_{i_0i_1}(k) \nonumber \\ 
	& ~~~  -\tauh_{i_1i_2}(k - 1 -  \tauh_{i_0i_1}(k))) \big)\Big) + \lambda_{i_0u}(||\uh||_\infty) \nonumber \\
	&  \cdots \nonumber \\
	& \leq \lambda_{i}( \mathcal{D}\! -\! 1)\big(V_{i_{ \mathcal{D}- 1}}(\xh_{i_{ \mathcal{D} \!-\! 1 }}(k \!-\! \theta_1)) \big) \!+\! \lambda_{iu}( \mathcal{D}\! - \!1)(||\uh||_\infty) \label{eq:bfin} \\
	& \leq \lambda_{i}( \mathcal{D})\big(V_{i_ \mathcal{D}}(\xh_{i_ \mathcal{D}}(k - \theta_2)) \big) + \lambda_{iu}( \mathcal{D})(||\uh||_\infty) \label{eq:fina}
\end{flalign}
where in (\ref{eq:bfin}) and (\ref{eq:fina}) functions $\lambda_{i}(\cdot)$ and $\lambda_{iu}(\cdot)$ obey 
\begin{equation}\label{eq:ll}
	\lambda_{i}(x) = \underset{l=0}{\overset{x - 1}{\mathrm{C}}} \lambda_{i_li_{l+1}} \in \mathcal{K}_{\infty}
\end{equation} 
and
\begin{equation}\label{eq:liu}
	\lambda_{iu}(x) = \sum_{m = 0}^{x - 2}\underset{l=0}{\overset{m}{\mathrm{C}}}\lambda_{i_li_{l+1}}\circ\lambda_{i_{l+1}u} + \lambda_{i_0u} \in \mathcal{K}_{\infty},
\end{equation}
respectively, and $\theta_1 \in [ \mathcal{D} - 2, M - (\delta + \bar{\tau} + 1)], \theta_2 \in [ \mathcal{D} - 1, M]$. As $\lambda_{i_li_{l+1}}$ in (\ref{eq:ll}) obeys $\lambda_{i_li_{l+1}} \leq \mathrm{id}$ for all $l \in \{0,1,\cdots,  \mathcal{D} - 1\}$ by Lemma \ref{le:pra}, then it follows from (\ref{eq:liu}) and (\ref{eq:pro2}) that $\lambda_{iu}( \mathcal{D})$ in (\ref{eq:fina}) obeys $\lambda_{iu}( \mathcal{D}) \leq  \mathcal{D}\mathrm{id}$.

We next prove that $\lambda_i( \mathcal{D})$ in (\ref{eq:fina}) obeys $\lambda_i( \mathcal{D}) \leq \zeta\mathrm{id}$. We prove our claim by contradiction. Suppose $\lambda_i( \mathcal{D}) > \zeta\mathrm{id}$, as $\lambda_{i_li_{l+1}}$ obeys $\lambda_{i_li_{l+1}} = \mathrm{id}$ or $\lambda_{i_li_{l+1}} \leq \zeta\mathrm{id}$ for all $l \in \{0,1,\cdots,  \mathcal{D} - 1\}$ by Lemma \ref{le:pra}, then for all $l \in \{0,1,\cdots,  \mathcal{D} - 1\}$, there holds $\lambda_{i_li_{l+1}} = \mathrm{id}$, which further implies $i_{l+1}$ is a true constraining node of $i_l$ by Lemma \ref{le:pra}. However, by Definition \ref{def:eff}, the length of such a sequence can not exceed $ \mathcal{D}$, then $i_{ \mathcal{D}}$ can not be a true constraining node of $i_{ \mathcal{D} - 1}$. We consider two cases: 1) $\xh_{i_ \mathcal{D}}(k - \theta_2)$ in (\ref{eq:bfin}) obeys $\xh_{i_ \mathcal{D}}(k - \theta_2) \geq 0$; 2) $\xh_{i_ \mathcal{D}}(k - \theta_2) < 0$. In the former case, as $i_{ \mathcal{D}}$ can not be the true constraining node of $i_{ \mathcal{D} - 1}$, it follows from Lemma \ref{le:pra} that $i_{ \mathcal{D} - 1}$ is the source node and $\lambda_{i_{ \mathcal{D} - 1}i_{ \mathcal{D}}} = 0$. In the latter case, it follows from Lemma \ref{le:pra} that $\lambda_{i_{ \mathcal{D} - 1}i_{ \mathcal{D}}} = \zeta\mathrm{id}$, establishing the contradiction. Thus our claim follows, and $\lambda( \mathcal{D})$ in (\ref{eq:fina}) obeys $\lambda( \mathcal{D}) \leq \zeta\mathrm{id}$, completing our proof.
\end{document}